\renewcommand*{\@fnsymbol}[1]{\ensuremath{\ifcase#1\or *\or \mathsection\or \mathparagraph\or
   \dagger\or \ddagger\or \|\or **\or \dagger\dagger
   \or \ddagger\ddagger \else\@ctrerr\fi}}
\definecolor {infocolor} {rgb} {0.6,0.6,0.6}
\newcommand{\eps}{\varepsilon}
\newcommand{\comp}[1]{\ensuremath{C_{#1}}\xspace}
\newcommand{\groups}[1]{\ensuremath{\G_{#1}}\xspace}
\newcommand{\mkmcal}[1]{\ensuremath{\mathcal{#1}}\xspace}
\newcommand{\G}{\mkmcal{G}}
\newcommand{\T}{\mkmcal{T}}
\newcommand{\C}{\mkmcal{C}}
\newcommand{\X}{\mkmcal{X}}
\newcommand{\M}{\mkmcal{M}}
\newcommand{\RR}{\mkmcal{R}}
\newcommand{\problem}{\textsc}
\newcommand{\mkmbb}[1]{\ensuremath{\mathbb{#1}}\xspace}
\newcommand{\R}{\mkmbb{R}}
\renewcommand{\paragraph}[1]{\smallskip\noindent\textbf{#1}.}
\title{Trajectory Grouping Structure}
\author{
Kevin Buchin\footnote{Dep. of Mathematics and Computer Science, TU Eindhoven,
  The Netherlands, {\tt \{k.a.buchin, m.e.buchin\}@tue.nl} and {\tt speckman@win.tue.nl}.
MB and BS are supported by the Netherlands' Organisation for Scientific
Research (NWO) under project no. 612.001.106. and 639.022.707,
respectively.}
\and
Maike Buchin\footnotemark[1]
\and
Marc van Kreveld\footnote{Dep. of Information and Computing Sciences, Utrecht
  University, The Netherlands, {\tt m.j.vankreveld@uu.nl} and {\tt f.staals@uu.nl}.
FS is supported by the Netherlands' Organisation for Scientific Research (NWO) under project no. 612.001.022.}
\and
Bettina Speckmann\footnotemark[1]
\and
Frank Staals\footnotemark[2]
}
\date{}
\begin{document}

\maketitle
\begin{abstract}\noindent
The collective motion of a set of moving entities like people,
birds, or other animals, is characterized by groups arising,
merging, splitting, and ending. Given the trajectories of these
entities, we define and model a structure that captures all of
such changes using the Reeb graph, a concept from topology.
The \emph{trajectory grouping structure} has three natural parameters
that allow more global views of the data in group size, group duration, and entity
inter-distance. We prove complexity bounds on the maximum number
of maximal groups that can be present, and give algorithms to
compute the grouping structure efficiently. We also study how
the trajectory grouping structure can be made robust, that is,
how brief interruptions of groups can be disregarded in the
global structure, adding a notion of persistence to the structure.
Furthermore, we showcase the results of
experiments using data generated by the NetLogo flocking model and
from the Starkey project. The Starkey data describe the movement of
elk, deer, and cattle. Although there is no ground truth for the 
grouping structure in this data, the
experiments show that the trajectory grouping structure is
plausible and has the desired effects when changing the essential
parameters.
Our research provides the first complete study of trajectory group
evolvement, including combinatorial, algorithmic, and experimental results.

\end{abstract}




\thispagestyle{empty}
\setcounter{page}{0}
\clearpage

\section{Introduction}
\label{sec:Introduction}

In recent years there has been an increase in location-aware devices and
wireless communication networks. This has led to a large amount of trajectory
data capturing the movement of animals, vehicles, and people. The increase in
trajectory data goes hand in hand with an increasing demand for techniques and
tools to analyze them, for example, in transportation sciences, sports,
ecology, and social services.

An important task is the analysis of movement patterns. In particular, given a 
set of moving entities 
we wish to determine when and which subsets of entities travel together.
When a sufficiently large set of entities travels
together for a sufficiently long time, we call such a set a \emph{group} (we
give a more formal definition later). Groups may start, end, split and merge
with other groups. Apart from the question what the current groups are, we
also want to know which splits and merges led to the current groups, when they
happened, and which groups they involved. We wish to capture this group change
information in a model that we call the \emph{trajectory grouping structure}.

The informal definition above suggests that three parameters are needed to
define groups: (i) a spatial parameter for the distance between entities;
(ii) a temporal parameter for the duration of a group; (iii) a count for
the number of entities in a group.
We will design our grouping structure definition
to incorporate these parameters so that we can study grouping at different
scales. We use the three parameters as follows: a small spatial parameter implies
we are interested only in spatially close groups, a large temporal parameter implies
we are interested only in long-lasting groups, and a large count implies we
are interested only in large groups. By adjusting the parameters suitably,
we can obtain more detailed or more generalized views of the trajectory grouping structure.

The use of scale parameters and the fact that the grouping
structure changes at discrete events suggest the use of
computational topology~\cite{eh-cti-10}.
In particular, we use Reeb graphs to capture the grouping structure.
Reeb graphs have been used extensively in shape analysis and the visualization
of scientific data (see e.g.~\cite{Biasotti2008,Edelsbrunner2008,Fomenko1997}).
A Reeb graph captures the structure of a two- or higher-dimensional
scalar function, by considering the evolution of the connected components
of the level sets.
The computation of Reeb graphs has received considerable attention in
computational geometry and topology; an overview is given in
\cite{DeyWang2011}.
Recently, a deterministic $O(n\log n)$ time algorithm
was presented for constructing the Reeb graph of a 2-skeleton of size
$n$~\cite{parsa2012connectivity}.
Edelsbrunner et al.~\cite{Edelsbrunner2008} discuss time-varying Reeb graphs
for continuous space-time data. Although we also analyze continuous
space-time data (2D-space in our case),
our Reeb graphs are not time-varying, but
time is the parameter that defines the Reeb graph.
Ge et al.~\cite{GeSBW11} use the Reeb graph to compute a one-dimensional ``skeleton'' from unorganized data. 
In contrast to our setting, in their applications the data comes without a time component. 
They use a proximity graph on the input points to build a simplicial complex from which they compute the Reeb graph.

Our research is motivated by and related to previous research on
flocks~\cite{benkert2008reporting, gudmundsson2006computing,gudmundsson2007efficient, vieira2009onlineflock},
herds~\cite{huang2008modeling},
convoys~\cite{jeung2008discovery},
moving clusters~\cite{kalnis2005moving},
mobile groups~\cite{hwang2005mobilegroup,Wang2008mobilegroup}
and swarms~\cite{li2010swarm}.
These concepts differ from each other in the way in which space and time
are used to test if entities form a group: do the entities stay in a single
disc or are they density-connected~\cite{ester1996density},
should they stay together during consecutive time steps or not, can the
group members change over time, etc.
Only the herds concept~\cite{huang2008modeling} includes the splitting
and merging of groups.

\paragraph{Contributions}
We present the first complete study of trajectory group
evolvement, including combinatorial, algorithmic, and experimental results.
Our research differs from and improves on previous research in the following ways:
Firstly, our model is simpler than herds and thus more intuitive.
Secondly, we consider the grouping structure at continuous times instead
of at discrete steps (which was done only for flocks).
Thirdly, we analyze the algorithmic and combinatorial aspects
of groups and their changes.
Fourthly, we implemented our algorithms and provide evidence that our model
captures the grouping structure well and can be computed efficiently.
Fifthly, we extend the model to incorporate persistence.

We created videos based on our implementation showing the maximal groups we
found in simulated NetLogo flocking data~\cite{netlogo_flocking,netlogo} and
in real-world data from the Starkey project~\cite{starkey}.


\clearpage

\paragraph{A Definition for a Group}
Let \X be a set of entities of which we have locations during some time span.
The $\eps$\emph{-disc} of an entity $x$ (at time $t$) is a disc of radius $\eps$
centered at $x$ at time $t$.
Two entities are \emph{directly connected}
at time $t$ if their $\eps$-discs overlap.
Two entities $x$ and $y$ are $\eps$\emph{-connected} at
time $t$ if there is a sequence $x=x_0,..,x_k=y$ of entities such that for all
$i$, $x_i$ and $x_{i+1}$ are directly connected.

A subset $S \subseteq \X$ of entities is $\eps$-connected at time $t$ if all
entities in $S$ are pairwise $\eps$-connected at time $t$. This means that
the union of the $\eps$-discs of entities in $S$ forms a single connected region.
The set $S$ forms a \emph{component} at time $t$ if and only if $S$ is
$\eps$-connected, and $S$ is maximal with respect to this property. The set
of components $\C(t)$ at time $t$ forms a partition of the entities in \X at time $t$.

Let the spatial parameter of a group be $\eps$, the temporal parameter $\delta$,
and the size parameter $m$.
A set $G$ of $k$ entities forms a \emph{group} during time interval $I$
if and only if the following three conditions hold:
(\textit{i}) $G$ contains at least $m$ entities, so $k \geq m$, (\textit{ii})
the interval $I$ has length at least $\delta$, and (\textit{iii}) at all times
$t \in I$, there is a component $C \in \C(t)$ such that $G \subseteq C$.

\begin{wrapfigure}[9]{r}{0.4\textwidth}
  \centering
  \vspace{-1.5\baselineskip}
  \includegraphics{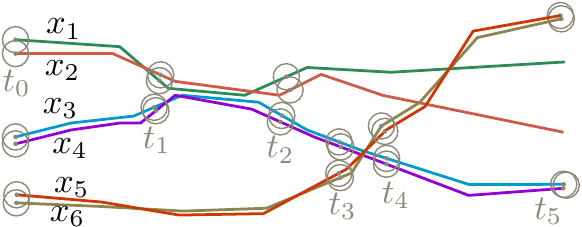}
  \caption{For $m=2$ and $\delta > t_4 - t_3$ there are four maximal groups:
    $\{x_1,x_2\}$, $\{x_3,x_4\}$, $\{x_5,x_6\}$, and $\{x_1,..,x_4\}$.
   }
  \label{fig:example_groups}
\end{wrapfigure}

We denote the interval $I = [t_s,t_e]$ of group $G$ with $I_G$.  Group $H$
\emph{covers} group $G$ if $G\subseteq H$ and $I_G\subseteq I_H$.  If there are
no groups that cover $G$, we say $G$ is \emph{maximal} (on $I_G$). In
Fig.~\ref{fig:example_groups}, groups $\{x_1,x_2\}$, $\tilde{G} = \{x_3,x_4\}$,
$\hat{G} = \{x_5,x_6\}$, and $G$ $=\{x_1,..,x_4\}$ are maximal: $\tilde{G}$ and
$\hat{G}$ on $[t_0,t_5]$, $G$ on $[t_1,t_2]$. Group $\{x_1,x_3\}$ is covered by $G$
and hence not maximal.

Note that entities can be in multiple maximal groups at the same time.
For example, entities $\{y_1,y_2,y_3\}$ can travel together for a while,
then $y_4,y_5$ may become $\eps$-connected, and shortly thereafter $y_1,y_4,y_5$
separate and travel together for a while. Then $y_1$ may be in two otherwise
disjoint maximal groups for a short time. An entity can also be in two maximal
groups where one is a subset of the other. In that case the group with fewer
entities must last longer. That an entity is in more groups simultaneously
may seem counterintuitive at first, but it is necessary to capture all
grouping information.
We will show that the total number of maximal groups is $O(\tau n^3)$, where
$n$ is the number of entities in \X and $\tau$ is the number of edges of each input trajectory. This bound is tight in the worst case.

Our maximal group definition uses three parameters, which all allow a
more global view of the grouping structure. In particular, we observe
that there is \emph{monotonicity} in the group size and the duration:
If $G$ is a group during interval $I$,
and we decrease the minimum required group size $m$ or
decrease the minimum required duration $\delta$,
then $G$ is still a group on time interval $I$. Also, if $G$ is a maximal
group on $I$, then it is also a maximal group for a smaller $m$ or smaller
$\delta$.
For the spatial parameter $\eps$ we observe monotonicity in a slightly
different manner: if $G$ is a group for a given $\eps$, then for a larger
value of $\eps$ there exists a group $G'\supseteq G$.
The monotonicity property is important when we want to have a more detailed
view of the data: we do not lose maximal groups in a more detailed view.
The group may, however, be extended in size and/or duration.

We capture the grouping structure using a Reeb graph of the $\eps$-connected
components together with the set of all maximal groups. Parts of the Reeb graph
that do not support a maximal group can be omitted.
The grouping structure can help us in answering various questions. For example:
\begin{itemize*}
\item What is the largest/longest maximal group at time $t$?
\item How many entities are currently (not) in any maximal group?
\item What is the first maximal group that starts/ends after time $t$?
\item What is the total time that an entity was part of any maximal group?
\item Which entity has shared maximal groups with the most other entities?
\end{itemize*}
Furthermore, the grouping structure can be used to partition the trajectories
in independent data sets, to visualize grouping aspects of the trajectories, and
to compare grouping across different data sets.


We also discuss 
robustness of the grouping structure
in the following sense.
If an entity $x$ leaves a group $G$ and almost immediately returns, we would
like to ignore the small interval on which $x$ and $G$ were separate,
and just consider $G \cup \{x\}$ as one group.
The maximal group definition given above is not robust, 
but later in the paper we will study an extension that is. 
Note that robustness requires an additional parameter that captures how
short any interruption in a group may last to be ignored.


\paragraph{Results and Organization}
We discuss how to represent the grouping structure in
Section~\ref{sec:Representing_the_grouping_structure}, and prove
that there are always $O(\tau n^3)$ maximal groups, which is tight in the worst case.
Here $n$ is the number
of trajectories (entities) and $\tau$ the number of edges in each
trajectory. We present an algorithm to compute the trajectory grouping
structure and all maximal groups in
Section~\ref{sec:Computing_the_Grouping_Structure}. This algorithm runs in
$O(\tau n^3+ N)$ time, where $N$ is the total output size.
In Section~\ref{sec:Robustness} we make our definitions more robust,
and extend our algorithms to this case.
In Section~\ref{sec:Evaluation} we evaluate our methods on synthetic and real-world data.

\begin{figure*}[b]
  \centering
  \includegraphics{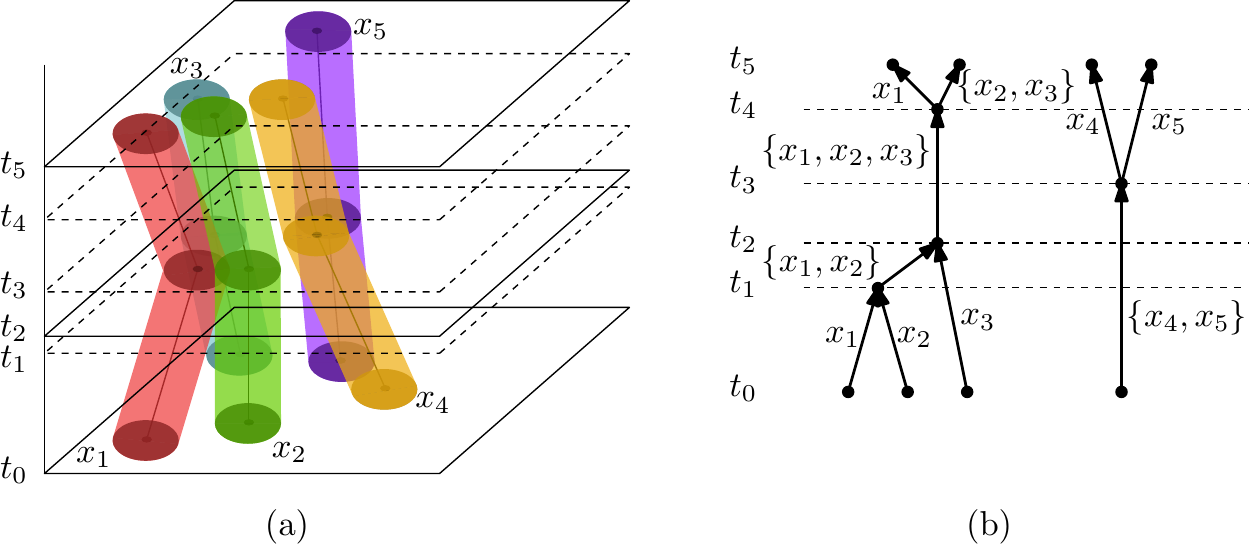}
  \caption{The manifold for the entities $\X = \{ x_1,..,x_5 \}$ (a), and the
    corresponding Reeb graph (b).}
  \label{fig:Reeb_graph}
\end{figure*}

\section{Representing the Grouping Structure}
\label{sec:Representing_the_grouping_structure}

Let \X be a set of $n$ entities, where each entity travels along a path of
$\tau$ edges. To compute the grouping structure we consider a manifold \M in
$\R^3$, where the $z$-axis corresponds to time. The manifold \M is the union of $n$
``tubes'' (see Fig.~\ref{fig:Reeb_graph}(a)). Each tube consists of $\tau$ skewed
cylinders with horizontal radius $\eps$ that we obtain by tracing the
$\eps$-disc of an entity $x$ over its trajectory.

Let $H_t$ denote the
horizontal plane at height $t$, then the set $\M \cap H_t$ is the \emph{level
  set} of $t$. The connected components in the level set of $t$ correspond to
the components (maximal sets of $\eps$-connected entities) at time $t$.  We
will assume for simplicity that all trajectories have their known positions at
the same times $t_0,..,t_\tau$ and that no three entities become
$\eps$-(dis)connected at the same time, but most of our theory does not depend
on these assumptions.

\subsection{The Reeb Graph}

We start out with a possibly disconnected solid that is the union of a collection of
tube-like regions: a 3-manifold with boundary.
Note that this manifold is not explicitly defined.
We are interested in horizontal cross-sections, and the evolution of the
connected components of these cross-sections defines the Reeb graph.
Note that this is different from the usual Reeb graph that is obtained
from the 2-manifold that is the boundary of our 3-manifold, using the
level sets of the height function (the function whose level sets we follow
is the height function above a horizontal plane below the manifold), see~\cite{eh-cti-10} for a background on these topics.

To describe how the components change over time, we consider
the Reeb graph \RR of \M (Fig.~\ref{fig:Reeb_graph}(b)).
The Reeb graph has a vertex $v$ at every time $t_v$ where the components change.
The vertex times are usually not at any of the given times $t_0,..,t_\tau$, but in between
two consecutive time steps.
The vertices of the Reeb graph can be classified in four groups.
There is a \emph{start vertex} for
every component at $t_0$ and an \emph{end vertex} at $t_\tau$. A start vertex
has in-degree zero and out-degree one, and an end vertex has in-degree one and
out-degree zero. The remaining vertices are either \emph{merge vertices} or
\emph{split vertices}. Since we assume that no three entities become
$\eps$-(dis)connected at exactly the same time there are no simultaneous splits and
merges. This means merge vertices have in-degree two and out-degree one,
and split vertices have in-degree one and out-degree two. A directed edge
$e=(u,v)$ connecting vertices $u$ and $v$, with $t_u < t_v$, corresponds to a
set $C_e$ of entities that form a component at any time $t \in I_e =
[t_u,t_v]$. The Reeb graph is this directed graph. Note that the Reeb graph
depends on the spatial parameter $\eps$, but not on the other two parameters of
maximal groups.

\begin{lemma}
  \label{lem:lowerbound_size_Reeb_graph}
  The Reeb graph $\RR$ for a set \X of $n$ entities, each of which travels along a
  trajectory of $\tau$ edges, can have $\Omega(\tau n^2)$ vertices and
  $\Omega(\tau n^2)$ edges.
\end{lemma}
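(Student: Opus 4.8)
The plan is to exhibit an explicit family of instances that forces $\Omega(\tau n^2)$ merge and split vertices in the Reeb graph. Since the Reeb graph is a lower-bound construction, I would build a gadget spanning a single pair of consecutive time steps (one edge of each trajectory, a piece of ``$\tau$'') that produces $\Omega(n^2)$ topological changes, and then repeat this gadget $\tau$ times along the time axis to multiply the count by $\tau$. The key observation is that a merge or split vertex occurs exactly when two components become $\eps$-connected or disconnected, so I want a configuration in which a quadratic number of such connection/disconnection events happen.

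First I would construct the per-gadget quadratic lower bound. Split the $n$ entities into two equal halves of size $n/2$. Arrange one half as a ``stationary'' cluster of well-separated entities, and make the other half a tight cluster that sweeps past them, so that each moving entity passes close to (becomes $\eps$-connected with, then disconnects from) each stationary entity in turn. A clean way is to place the $n/2$ stationary entities far enough apart that they are never $\eps$-connected to each other, and route the moving cluster on a path that visits each stationary entity one after another. Each such visit is a merge followed by a split, and arranging the geometry so the moving group connects to the stationary entities in sequence yields $\Theta((n/2)\cdot(n/2)) = \Omega(n^2)$ merge/split events within the gadget. By the structure of the Reeb graph (merge and split vertices each have degree three), $\Omega(n^2)$ such events force $\Omega(n^2)$ vertices and, since the graph is connected along the time axis, $\Omega(n^2)$ edges as well.

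Next I would handle the repetition over $\tau$. Because each trajectory has $\tau$ edges, I can partition the time span into $\tau$ consecutive intervals and place an independent copy of the gadget in each interval, reusing the same $n$ entities but resetting their relative positions at the interval boundaries. Each copy contributes $\Omega(n^2)$ vertices, and the copies are vertex-disjoint in time, so the totals add: the Reeb graph has $\tau \cdot \Omega(n^2) = \Omega(\tau n^2)$ vertices and edges. I must make sure the transitions between consecutive gadgets introduce only $O(n)$ extra vertices per boundary so they do not dominate, which is immediate since at any single time instant at most $O(n)$ components exist.

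The main obstacle is the geometric realizability of a single gadget that actually produces $\Omega(n^2)$ distinct, non-simultaneous connection events while respecting the assumption that no three entities become $\eps$-(dis)connected simultaneously and that the motion is piecewise linear with the prescribed number of edges. Concretely, I need to verify that the moving cluster can be routed so that it becomes $\eps$-connected to the stationary entities one at a time (inducing a clean alternation of merges and splits) without accidental connections among the stationary entities themselves, and that each entity's trajectory still uses only $O(1)$ edges per gadget so the per-trajectory edge budget stays $\tau$. Once such a gadget is described and its event count verified, the repetition argument and the counting of degree-three vertices are routine.
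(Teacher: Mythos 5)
Your high-level plan (a gadget spanning one time step that forces $\Omega(n^2)$ events, repeated $\Theta(\tau)$ times with position resets at the boundaries, plus the degree argument for edges) is exactly the paper's, but your gadget itself has a genuine gap: you count pairwise $\eps$-connection events, whereas Reeb graph vertices record changes of \emph{connected components} only. If the moving half really is a ``tight cluster'' (its members mutually $\eps$-connected), then when that cluster sweeps past an isolated stationary entity $s_j$ you get exactly one merge vertex (when the first cluster member reaches $s_j$) and one split vertex (when the last member leaves it); all the other pairwise connections and disconnections between cluster members and $s_j$ happen while $s_j$ is already inside the cluster's component and create no vertices at all. So a connected sweeping cluster yields only $\Theta(n)$ vertices per gadget, i.e.\ $O(\tau n)$ overall, not $\Omega(\tau n^2)$. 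The telltale sign is in your own verification checklist: you guard against ``accidental connections among the stationary entities themselves'' but never against connections among the \emph{moving} entities --- and that is precisely the condition on which the quadratic count stands or falls.

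The fix is to keep the moving entities pairwise disconnected at all times, so that each (moving, stationary) encounter is a genuine merge of two components followed by a split. This is what the paper does: it takes two families of $n/2$ entities, places both on the line $y=x$, and lets one family move rightward and the other downward at unit speed (with $\eps=0$), so that every right-mover/down-mover pair coincides at its own time $t_i + j + \ell$, no two members of the same family ever meet, and each coincidence contributes a merge--split pair of vertices, giving $\Omega(n^2)$ vertices with a single straight-line edge per entity per gadget. Your collinear-sweep variant can be repaired the same way: space the moving entities more than $2\eps$ apart and preserve that spacing (equal velocities), and make sure no stationary entity is near two moving entities at once. With that change, the rest of your argument --- $O(1)$ edges per entity per gadget, the $\tau$-fold repetition with $O(n)$ boundary overhead, and deducing $\Omega(\tau n^2)$ edges from $\Omega(\tau n^2)$ vertices of degree at least one --- goes through and matches the paper's proof.
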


\begin{proof}
  We construct $n$ trajectory edges on which the entities travel in between two
  consecutive time stamps, say $t_i$ and $t_{i+1}$, such that the Reeb graph
  for $\eps = 0$ has $\Omega(n^2)$ vertices $v$ with $t_v \in
  [t_i,t_{i+1}]$. We use this construction in between all times $t_{2i}$ and
  $t_{2i+1}$, and move the entities back to their starting position in between
  $t_{2i+1}$ and $t_{2i+2}$. Therefore, the total number of vertices is
  $\Omega(\tau n^2)$. Since each vertex has degree one or three it follows that
  the number of edges is also $\Omega(\tau n^2)$.

  \begin{figure}[h]
    \centering
    \includegraphics{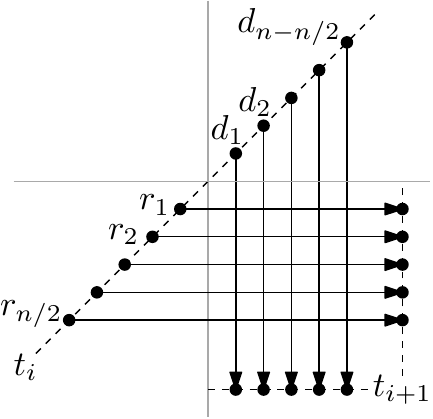}
    \caption{Every pair of entities $r_j$ and $d_\ell$ are at the same point at
      time $t_i + j + \ell$. This yields $\Omega(n^2)$ vertices in the interval
      $[t_i,t_{i+1}]$.}
    \label{fig:lowerbound_Reeb_graph}
  \end{figure}

  Let $\X = R \cup D$, with $R = r_1,..,r_{n/2}$ and $D = d_1,..,d_{n-n/2}$. At
  the start (time $t_i$) all entities start at the line $y=x$. In particular,
  we place $r_j$ on $(-j,-j)$ and $d_\ell$ on $(\ell,\ell)$. All entities move with
  speed one. The entities in $R$ move to the right, and the entities in $D$
  move downwards (see Fig.~\ref{fig:lowerbound_Reeb_graph}). It follows that
  each entity $r_j$ and $d_\ell$ are both at the same point at time $t_i + j +
  \ell$. Hence, we get a vertex in the Reeb graph. There are $\Omega(n^2)$ such
  intersections, and thus $\Omega(n^2)$ vertices. The lemma follows.
\end{proof}

\begin{theorem}
  \label{th:size_Reeb_graph}
  Given a set \X of $n$ entities, in which each entity travels along a
  trajectory of $\tau$ edges, the Reeb graph $\RR = (V,E)$ has $O(\tau n^2)$
  vertices and edges. These bounds are tight in the worst case.
\end{theorem}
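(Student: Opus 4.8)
The plan is to establish the upper bound $O(\tau n^2)$ on $|V|$ and $|E|$, since the matching lower bound is already provided by Lemma~\ref{lem:lowerbound_size_Reeb_graph}, which immediately yields tightness. I would begin by recalling the classification of the vertices of $\RR$ into start, end, merge, and split vertices. There are at most $n$ start vertices and $n$ end vertices, one per component at $t_0$ and $t_\tau$ respectively, contributing only $O(n)$ in total. The whole question therefore reduces to bounding the number of merge and split vertices, that is, the number of times the partition $\C(t)$ into components changes.

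The key observation is that the component structure can only change at a moment when two entities become directly connected or directly disconnected, which happens exactly when the distance between them passes through $2\eps$ (the threshold at which their $\eps$-discs start or stop overlapping). I would therefore count, over all pairs of entities, how often their distance can cross the value $2\eps$. Fix a pair $x,y$ and a single time interval $[t_i,t_{i+1}]$; on this interval both entities move along straight trajectory edges with constant velocity, so the vector $x(t)-y(t)$ is linear in $t$ and the squared distance $\lVert x(t)-y(t)\rVert^2$ is a quadratic in $t$. Setting this quadratic equal to $(2\eps)^2$ gives at most two solutions, so each pair produces at most two crossings per interval, hence at most $2\tau$ crossings over all $\tau$ intervals. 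Summing over the $\binom{n}{2}=O(n^2)$ pairs yields $O(\tau n^2)$ distance-threshold crossings in total. By the non-degeneracy assumption that no three entities become $\eps$-(dis)connected simultaneously, each merge or split vertex is triggered by exactly one such crossing, so the number of merge and split vertices is $O(\tau n^2)$, and together with the start and end vertices this gives $|V|=O(\tau n^2)$.

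Finally, I would pass from vertices to edges using the bounded degree of $\RR$: start and end vertices have degree one, while merge and split vertices have degree three, so $\sum_{v}\deg(v)\le 3|V|$ and the handshaking lemma gives $|E|\le\tfrac{3}{2}|V|=O(\tau n^2)$. The main subtlety to handle carefully is the distinction between a distance-threshold crossing and an actual change of components: not every crossing of $2\eps$ produces a merge or split vertex, since the two entities involved may already be connected through other entities. This only means we are overcounting, so it is harmless for an upper bound --- every merge or split still \emph{requires} a crossing --- but it is the point where the argument must be phrased as ``every component change is caused by some crossing'' rather than as a bijection. With both bounds in hand, tightness follows from Lemma~\ref{lem:lowerbound_size_Reeb_graph}.
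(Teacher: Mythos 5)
Your proof is correct and takes essentially the same approach as the paper: you count at most two connect/disconnect events per pair of entities per trajectory edge (the paper phrases this as each pair being directly connected during at most one subinterval of $[t_i,t_{i+1}]$, which is exactly your quadratic-distance observation), sum over the $O(n^2)$ pairs and $\tau$ edges, and then pass to edges via the constant vertex degree. The extra care you take with the overcounting subtlety and the handshaking step are just elaborations of the same argument.
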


\begin{proof}
  Lemma~\ref{lem:lowerbound_size_Reeb_graph} 
  gives
  a simple construction that shows that the Reeb graph may have $\Omega(\tau n^2)$
  vertices and edges. 
  For the upper bound,
  consider a trajectory edge $(v_i,v_{i+1})$ of (the trajectory of) entity $x
  \in \X$. An other entity $y \in \X$ is directly connected to $x$ during at
  most one interval $I \subseteq [t_i,t_{i+1}]$. This interval yields at most
  two vertices in \RR. The trajectory of $x$ consists of $\tau$ edges, hence a
  pair $x,y$ produces $O(\tau)$ vertices in \RR. This gives a total of
  $O(\tau n^2)$ vertices. Each vertex has constant degree, so there are
  $O(\tau n^2)$ edges. 
\end{proof}

\paragraph{The Trajectory Grouping Structure} The trajectories of entities are
associated with the edges of the Reeb graph in a natural way. Each entity
follows a directed path in the Reeb graph from a start vertex to an end
vertex. Similarly, (maximal) groups follow a directed path from a start or
merge vertex to a split or end vertex. If $m>0$ or $\delta>0$, there may be
edges in the Reeb graph with which no group is associated. These edges do not
contribute to the grouping structure, so we can discard them.  The remainder of
the Reeb graph we call the \emph{reduced Reeb graph}, which, together with all
maximal groups associated with its edges, forms the \emph{trajectory grouping
  structure}. 


\clearpage

\subsection{Bounding the Number of Maximal Groups}
\label{sub:Bounding_the_number_of_maximal_Groups}

\begin{wrapfigure}[12]{r}{0.4\textwidth}
  \centering
  \includegraphics{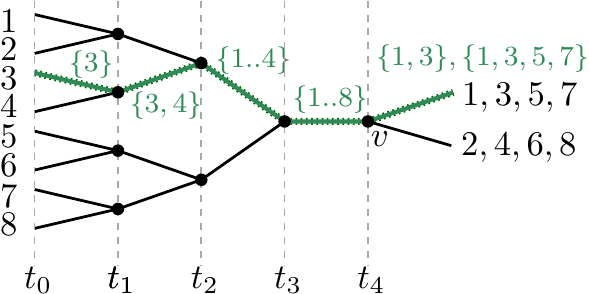}
  \caption{The maximal groups containing entity $3$ (green). Vertex $v$
    creates six new groups, including $\{1,3\}$ and $\{1,3,5,7\}$.}
  \label{fig:tree_example}
\end{wrapfigure}
To bound the total number of maximal groups, we study the case
where $m=1$ and $\delta=0$, because larger values can only reduce
the number of maximal groups.
It may seem as if each vertex in the Reeb graph simply creates as many
maximal groups as it has outgoing edges. However, consider for example
Fig.~\ref{fig:tree_example}. Split vertex $v$ creates not only the maximal
groups $\{1,3,5,7\}$ and $\{2,4,6,8\}$, but also $\{1,3\}$, $\{5,7\}$,
$\{2,4\}$, and $\{6,8\}$.
These last four groups are all maximal on $[t_2,t]$, for $t > t_4$.
Notice that all six newly discovered groups start strictly before $t_v$,
but only at $t_v$ do we realize that these groups are maximal, which
is the meaning that should be understood with ``creating maximal groups''.
This example can be extended to arbitrary size. Hence a vertex
 $v$ may create many new maximal groups, some of which start before~$t_v$.
We continue to show that we may obtain $\Omega(\tau n^3)$ maximal groups, and
that it cannot get worse than that, that is, the number of maximal groups is at
most $O(\tau n^3)$ as well.

\begin{lemma}
  \label{lem:lowerbound_num_groups}
  For a set \X of $n$ entities, in which each entity travels along a trajectory
  of $\tau$ edges, there can be $\Omega(n^3\tau)$ maximal groups.
\end{lemma}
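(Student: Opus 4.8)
The plan is to construct an explicit family of trajectories that produces $\Omega(n^3\tau)$ maximal groups. The key intuition comes from Figure~\ref{fig:tree_example}: a single split vertex can create a number of new maximal groups that is superlinear in the number of entities passing through it, because groups of \emph{every} size nested inside the splitting component may simultaneously become maximal. If I can design a gadget in which, during a single pair of time steps, a linear number of split events each releases $\Omega(n)$ new maximal groups, I will obtain $\Omega(n^2)$ maximal groups per gadget, and repeating the gadget $\Theta(\tau)$ times along the trajectory (as in the proof of Lemma~\ref{lem:lowerbound_size_Reeb_graph}) yields the claimed $\Omega(n^3\tau)$ bound. As always I set $m=1$ and $\delta=0$ so that size and duration impose no restriction.

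\medskip

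First I would build a single gadget occupying the interval $[t_i,t_{i+1}]$. The idea is to first merge all $n$ entities into one large component, and then split them off one at a time (or in small batches), so that at each of the $\Theta(n)$ split vertices a whole hierarchy of nested subgroups is certified maximal. Concretely, I would arrange the entities so that they arrive as $\Theta(n)$ small ``clusters'' that are pairwise $\eps$-connected only during a brief sub-interval and then separate in a staggered fashion. A cluster of size $k$ that has been travelling with the rest and now departs exposes the group consisting of exactly the entities in that cluster as maximal on the interval up to the split; by nesting the departures so that at the $j$-th split the remaining set still contains $\Theta(n)$ entities whose \emph{previous} common history makes $\Theta(n)$ distinct subgroups newly maximal, each split contributes $\Omega(n)$ maximal groups. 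Summing over the $\Theta(n)$ splits in the gadget gives $\Omega(n^2)$ maximal groups created within $[t_i,t_{i+1}]$.

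\medskip

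Next I would chain the gadgets. Using the same device as in Lemma~\ref{lem:lowerbound_size_Reeb_graph}, I place one gadget between $t_{2i}$ and $t_{2i+1}$ and then move every entity back to a canonical starting configuration between $t_{2i+1}$ and $t_{2i+2}$, so that the gadgets are independent and the maximal groups created in different gadgets are distinct (they live on disjoint time intervals and hence cannot cover one another). With $\Theta(\tau)$ gadgets, each producing $\Omega(n^2)$ maximal groups, the total is $\Omega(n^2 \cdot \tau \cdot n)=\Omega(\tau n^3)$, as desired.

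\medskip

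I expect the main obstacle to be the careful geometric and combinatorial design of the single gadget: I must guarantee that the $\Omega(n)$ subgroups released at each split are genuinely \emph{maximal} (not covered by any larger or longer group) and genuinely \emph{distinct} across the $\Theta(n)$ splits within the gadget, so that no double-counting occurs and the counts truly multiply. Verifying maximality requires checking that each released subgroup $G$ is not contained in a larger component over a longer interval — this is exactly the subtlety illustrated by the six groups at vertex $v$ in Figure~\ref{fig:tree_example}, where groups that start strictly before $t_v$ are only recognized as maximal at $t_v$. Getting the nesting structure of the splits right, so that the number of newly maximal groups at the $j$-th split is $\Omega(n)$ rather than $O(1)$, is the crux; once the gadget is correct, the chaining and the final arithmetic are routine.
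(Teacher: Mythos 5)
There is a genuine gap, and it is in the final arithmetic: your gadget produces $\Omega(n^2)$ maximal groups per time step, and you chain $\Theta(\tau)$ gadgets, so the total is $\Omega(\tau n^2)$ --- not $\Omega(\tau n^3)$. The step ``$\Omega(n^2\cdot\tau\cdot n)$'' introduces an unexplained extra factor of $n$; nothing in your construction supplies it. Moreover, this is not a bookkeeping slip that can be patched: a gadget that merges everything once and then splits it off in $\Theta(n)$ staggered splits has only $O(n)$ split vertices in the interval $[t_i,t_{i+1}]$, and (by the counting argument used for the upper bound, Theorem~\ref{th:upperbound_num_groups}) each vertex can account for at most $O(n)$ new maximal groups, so a one-round gadget is fundamentally capped at $O(n^2)$ groups per time step. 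To reach the claimed bound you need a gadget that produces $\Omega(n^3)$ maximal groups \emph{within a single pair of time stamps}, which forces $\Theta(n^2)$ Reeb-graph vertices (not $\Theta(n)$) inside one time step.

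The paper's construction achieves exactly this by running $\Theta(n)$ of your merge-then-split \emph{rounds} inside one time step. It keeps a set $S$ of $3n/4$ stationary, mutually $\eps$-connected entities on a line, and moves a column $D$ of $n/4$ entities at height $\nu$ with $\eps<\nu<2\eps$ horizontally across them (Fig.~\ref{fig:lowerbound_num_groups}). Because $\nu$ and the spacing are chosen so that a moving entity is connected to $S$ only when it is near some member of $S$ and disconnected in the gaps, each moving entity connects to and disconnects from $S$ a total of $\Theta(n)$ times during a single trajectory edge --- each time with a \emph{different} member of $S$, so the rule that a fixed pair of linearly moving entities yields at most one connect/disconnect interval per edge is respected. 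The staggered spacing of $D$ makes each passage over a gap a round of $k=n/4$ merges followed by $k$ splits, and within one round the $i$-th split creates $k-i$ new maximal groups (the groups $G_{(i+1)j}=S\cup d_{i+1}\cup\cdots\cup d_j$, each with a distinct, earlier starting time), giving $\Theta(n^2)$ groups per round, hence $\Theta(n)\cdot\Theta(n^2)=\Omega(n^3)$ per time step and $\Omega(\tau n^3)$ overall. In short: your per-round analysis matches the paper's, but you are missing the idea of packing $\Theta(n)$ rounds into each time step, which is exactly where the third factor of $n$ comes from.
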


\begin{proof}


  Similar to Lemma~\ref{lem:lowerbound_size_Reeb_graph} we construct $n$
  trajectory edges on which the entities travel in between $t_i$ and $t_{i+1}$,
  and repeat this construction in $O(\tau)$ time steps. Our construction yields
  $\Omega(n^3)$ maximal groups $G$ with $I_G \subseteq [t_i,t_{i+1}]$,
  resulting in $\Omega(\tau n^3)$ maximal groups overall as claimed.

  For ease of notation we assume that $n$ is divisible by four,
  and we write $x$ to denote both the entity $x$ and the $\eps$-disc of entity $x$.
  We partition our set of
  entities \X into two sets $S$ and $D$. The entities in $S = \{
  s_1,..,s_{3n/4} \}$ are stationary. They all lie on the line $y=0$, ordered
  from left to right, with a distance $r < 2\eps$ in between two
  consecutive entities. Hence $S$ is $\eps$-connected.

  The remaining
  entities $D$ will move on a horizontal line $y = \nu$, for some $\eps <
  \nu < 2\eps$. At time $t_i$, the discs $D = \{d_1,..,d_{n/4}\}$, ordered
  from right to left, all lie to the left of the discs in $S$. They all move to
  the right with the same speed. The distance $h_i$ between $d_i$ and
  $d_{i+1}$ is $r+(n/4 - i)\mu$, for some small $\mu > 0$. Hence, the distances
  get smaller the further the discs are to the left. See
  Fig.~\ref{fig:lowerbound_num_groups} for an illustration of this
  construction.

  \begin{figure}[hbt]
    \centering
    \includegraphics{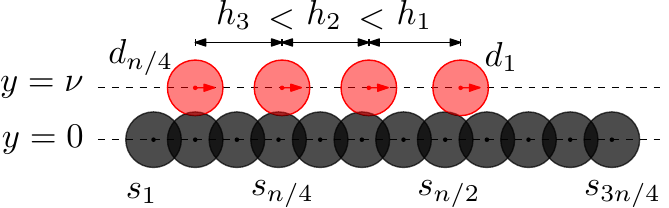}
    \caption{The lower bound construction for $n=16$. The black discs correspond
      to the stationary entities in $S$. The red (grey) discs correspond the
      entities in $D$.}
    \label{fig:lowerbound_num_groups}
  \end{figure}

  We can choose the exact values for $r$ and $\nu$ such that the sequence of
  events can be partitioned into \emph{rounds}. Round $i$ consists of
  a series of $k_i$ merge events followed by a series of $k_i$ split events.
  In a series $J_1,..,J_k$ of merges the discs $d_1,..,d_k$ become directly
  connected with discs in $S$. Merge $J_i$ will start a new maximal group
  $G_{1i}$, where $G_{ij} = S \cup \bigcup_{\ell=i}^j d_\ell$. Hence after the
  $k$ merges, $k$ maximal groups have started. In the subsequent series
  $P_1,..,P_k$ of split events, the discs $d_1,..,d_k$ stop being directly
  connected with a disc in $S$. When $d_i$ leaves, the sets of entities
  $G_{ii},..,G_{ik}$ end as maximal groups. However, when $d_i$ leaves
  $G_{ij}$, it creates $G_{(i+1)j}$ as a new maximal group that started on
  $J_{(i+1)}$ (see Fig.~\ref{fig:lowerbound_times}). This means $P_i$ creates
  $k - i$ new maximal groups.


  \begin{figure}[h]
    \centering
    \includegraphics{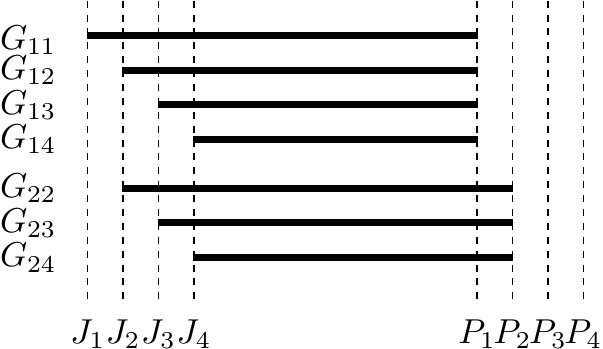}
    \caption{The time intervals on which $G_{ij}$ is a maximal group in a given
      round.}
    \label{fig:lowerbound_times}
  \end{figure}

  We now show that, for any $m \leq 3n/4$ and any $\delta$, this
  construction yields $\Omega(n^3)$ maximal groups. Since we can choose the
  speed of the discs in $D$, we can choose it such that all groups have
  a minimum duration of at least~$\delta$. Now consider the rounds
  $n/2,..,3n/4$. In each of these rounds we have $k = n/4$ merges followed
  by $n/4$ splits. The splits in each round create a total of
  $\sum_{i=1}^{n/4} (n/4-i) = \Omega(n^2)$ new maximal groups. Each of these
  groups contains $S$, hence its size is at least $3n/4$. It follows that the
  total number of maximal groups in those $n/4$ rounds is $\Omega(n^3)$.
\end{proof}

\begin{theorem}
  \label{th:upperbound_num_groups}
  Let \X be a set of $n$ entities, in which each entity travels along a
  trajectory of $\tau$ edges. There are at most $O(\tau n^3)$ maximal
  groups, and this is tight in the worst case.
\end{theorem}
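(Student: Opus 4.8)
The lower bound is already supplied by Lemma~\ref{lem:lowerbound_num_groups}, so the plan is to establish the matching upper bound $O(\tau n^3)$. My strategy is a charging argument: I assign each maximal group to a single vertex of the Reeb graph and then show that every vertex receives only $O(n)$ groups. Since Theorem~\ref{th:size_Reeb_graph} guarantees $O(\tau n^2)$ vertices, this yields $O(\tau n^3)$ in total. As in Lemma~\ref{lem:lowerbound_num_groups}, it suffices to treat $m=1$ and $\delta=0$, since larger parameters only remove groups.

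First I would pin down where a maximal group can end. For a maximal group $G$ on $I_G=[t_s,t_e]$, the right endpoint $t_e$ cannot lie at a merge vertex (there the component only grows, so $G$ could be extended to the right, contradicting maximality) nor strictly inside an edge. Hence $t_e$ is at a split vertex or the final end vertex, and I charge $G$ to that vertex; symmetrically $t_s$ sits at a merge or start vertex. If $v$ is a split vertex separating a component $C$ into $C_1,C_2$, then every group charged to $v$ must satisfy $G\cap C_1\neq\emptyset$ and $G\cap C_2\neq\emptyset$, since otherwise it would survive into $C_1$ or $C_2$ and not end at $t_v$.

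The heart of the argument is the bound of $O(n)$ groups per vertex, for which I would use the following characterisation. For a fixed interval $I$, call two entities equivalent if they lie in a common component at every time of $I$; because the components partition \X at each instant, this is genuinely an equivalence relation, and its classes are exactly the inclusion-maximal sets that travel together throughout $I$. Consequently a maximal group with interval $I$ is precisely such an equivalence class that additionally cannot be extended in time. Now fix the end vertex $v$ (so $t_e=t_v$) and compare two groups $G,H$ charged to $v$: if $t_s(G)\le t_s(H)$, then the partition on $[t_s(G),t_v]$ refines the one on $[t_s(H),t_v]$, so $G$ is contained in the unique $H$-class it meets; that class is either $H$ itself, giving $G\subseteq H$, or disjoint from $H$, giving $G\cap H=\emptyset$. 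Thus the groups charged to $v$ form a laminar family on the ground set $C$, and since distinct maximal groups sharing an endpoint must have distinct entity sets, there are at most $2|C|-1=O(n)$ of them. Summing over all $O(\tau n^2)$ vertices gives the claimed $O(\tau n^3)$ bound; the groups ending at end vertices contribute only $O(n)$ in total, as their components partition \X.

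I expect the main obstacle to be the laminarity step: it hinges on the equivalence-relation characterisation and on correctly comparing the two partitions induced by nested intervals, and one must be careful that maximality on the \emph{left} is exactly what forbids two distinct groups from sharing both their endpoint and their entity set. A secondary point to verify is that no group's lifetime terminates at a merge vertex, which is what makes the charging well defined. Once these structural facts are in place, the counting is immediate from the standard $2n-1$ bound on laminar families.
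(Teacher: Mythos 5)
Your proof is correct, but it takes a genuinely different route from the paper's. The paper charges each maximal group to the vertex where it \emph{starts} (a start or merge vertex) and bounds the number of groups per vertex by $O(n)$ structurally: it takes the union of the paths through the Reeb graph followed by the entities involved in the merge, unravels this sub-DAG into a tree $\T_v$ with at most $n$ leaves and maximum degree three, maps each maximal group starting at $v$ to the tree node where it first loses an entity, and shows this map is injective (two distinct maximal groups ending at the same node would be covered by their union, contradicting maximality), so the count is at most the number of leaves plus degree-three nodes, i.e.\ $O(n)$. You instead charge each maximal group to the vertex where it \emph{ends} --- after first arguing, correctly, that a maximal group cannot end at a merge vertex or in the interior of an edge --- and bound the groups per split vertex by $O(n)$ via a laminar-family argument: with $m=1$ and $\delta=0$, a maximal group on $I$ is exactly a class of the equivalence relation ``in a common component at every time of $I$,'' nested intervals induce refining partitions, so any two groups sharing the end vertex $v$ (all of which lie inside the component entering $v$) are nested or disjoint, distinct as sets by maximality, and a laminar family of distinct nonempty subsets of an $n$-set has at most $2n-1$ members. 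Both per-vertex bounds are sound and yield $O(\tau n^3)$ after summing over the $O(\tau n^2)$ vertices. Your argument is arguably more elementary, since it works directly from the group definition and uses the Reeb graph only to locate and count end vertices; the paper's unraveled tree instead exposes how groups propagate forward through the Reeb graph. It is worth noting that the laminar structure at the heart of your proof is essentially what the paper establishes separately in Lemma~\ref{lem:candidate_groups_are_subsets} and exploits as the grouping tree $\T_e$ in its algorithm, so your counting argument anticipates machinery the paper defers to Section~\ref{sec:Computing_the_Grouping_Structure}.
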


\begin{proof}
  Lemma~\ref{lem:lowerbound_num_groups} 
  gives
  a construction that shows that there may be $\Omega(\tau n^3)$
  maximal groups. 

  We proceed with the upper bound.
  Every maximal group starts either at a start vertex, or a merge vertex.
  We will show that the number of maximal groups starting at a start or merge
  vertex is $O(n)$. Since there are $O(\tau n^2)$ start and merge vertices the lemma follows.
  We will discuss only the merge vertex case; the proof for a start vertex is the same.

  Let $v$ be a merge vertex, let $S\subset \X$ and $T\subset \X$ be the components
that merge at
  $v$, and let $p_x$ denote the path of entity $x \in S \cup T$ through \RR,
  starting at $v$. The union over all $x$ of these paths $p_x$ forms a
  directed acyclic graph (DAG)
  $\RR'_v$, which is a subgraph of \RR (see Fig.~\ref{fig:upperbound_num_groups}~(a)).
Consider ``unraveling'' $\RR'_v$ into a tree $\T_v$ as follows. If $p_x$ and
  $p_y$ split in some vertex $u$ and merge again in vertex $w$, with $t_w >
  t_u$ we duplicate the subpath starting at $w$. This yields a tree $\T_v$ with
  root $v$ and at most $|S|+|T| \leq n$ leaves. Furthermore, all nodes in $\T_v$
  have degree at most three (see Fig.~\ref{fig:upperbound_num_groups}~(b)).

  \begin{figure}[h]
    \centering
    \includegraphics{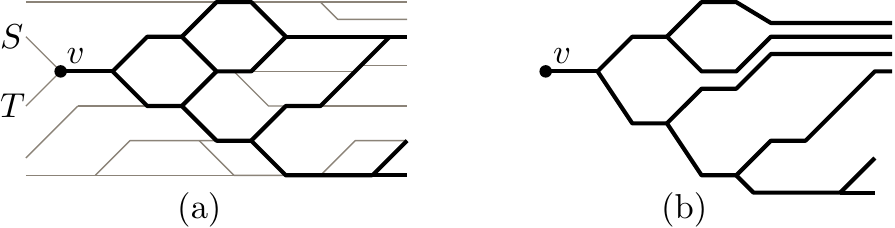}
    \caption{DAG $\RR'_v$ (black) as a subgraph of \RR (grey) (a), and the tree
      $\T_v$ obtained by unfolding $\RR'_v$ (b).}
    \label{fig:upperbound_num_groups}
  \end{figure}

  Since all maximal groups end at either a split or an end vertex, all
  maximal groups $G_1,..,G_k$ that start at $v$ can now be represented by
  subpaths in $\T_v$ starting at the root. The path corresponding to a maximal group $G$
  ends at the first node where two entities $x,y \in G$ split, or at a leaf if
  no such node exists. Clearly, paths $p_x$ and $p_y$ can split only at a
  degree three node. Since $\T_v$ has at most $n$ leaves it follows there are at
  most $O(n)$ degree three nodes.

  Finally, we show that there is at most one maximal group that ends at a given
  leaf or degree three node of $\T_v$. Assume by contradiction that $G_i$ and
  $G_j$, with $i \neq j$, both end at node $u$. Both maximal groups share the same path
  from the root of $\T_v$ to $u$, so all entities in $G_i$ and $G_j$ are in the
  same component at all times $t \in I = [t_v,t_u]$. Hence $G_i \cup G_j$ is a
  maximal group on $I$, contradicting that $G_i$ and $G_j$ were maximal. We
  conclude that the number of maximal groups $k$ that start at $v$ is at most
  the number of leaves plus the number of degree three nodes in $\T_v$. Hence $k =
  O(n)$. Summing over all $O(\tau n^2)$ start and merge vertices gives
  $O(\tau n^3)$ maximal groups in total. 
\end{proof}

\section{Computing the Grouping Structure}
\label{sec:Computing_the_Grouping_Structure}

To compute the grouping structure we need to compute the reduced Reeb graph
and the maximal groups.
We now show how to do this 
efficiently. 
Removing the edges of the Reeb graph that are not used is an easy post-processing step
which we do not discuss further.

\subsection{Computing the Reeb Graph}
\label{sub:Computing_the_Reeb-graph}


We can compute the Reeb graph $\RR = (V,E)$ as follows. We first compute all
times where two entities $x$ and $y$ are at distance $2\eps$ from each
other. We distinguish two types of events, \emph{connect events} at which $x$
and $y$ become directly connected, and \emph{disconnect events} at which $x$
and $y$ stop being directly connected.

We now process the events on increasing time while maintaining the current
components. We do this by maintaining a graph $G = (\X,Z)$ representing the
directly-connected relation, and the connected components in this graph. The
set of vertices in $G$ is the set of entities. The graph $G$ changes over time:
at connect events we insert new edges into $G$, and at disconnect events we
remove edges.

At any given time $t$, $G$ contains an edge $(x,y)$ if and only if $x$ and $y$
are directly connected at time~$t$. Hence the components at $t$ (the maximal
sets of $\eps$-connected entities) correspond to the connected components in
$G$ at time $t$. Since we know all times at which $G$ changes in advance,
we can use the same approach as \citet{parsa2012connectivity} to maintain the
connected components: we assign a weight to each edge in $G$
and we represent the connected components using a maximum weight spanning
forest. The weight of edge $(x,y)$ is equal to the time at which we remove it
from $G$, that is, the time at which $x$ and $y$ become directly
disconnected. We store the maximum weight spanning forest $F$ as an ST-tree
\cite{sleator1983sttrees}, which allows connectivity queries, inserts, and
deletes, in $O(\log n)$ time.

We spend $O(n^2)$ time to initialize the graph $G$ at $t_0$
in a brute-force manner.
For each component we create a start vertex in \RR.
We also initialize a one-to-one mapping $M$ from the
current components in $G$
to the corresponding vertices in \RR.
When we handle a connect event of entities $x$ and $y$ at time~$t$, we query
$F$ to get the components $C_x$ and $C_y$ containing $x$ and $y$,
respectively. Using $M$ we locate the corresponding vertices $v_x$ and $v_y$ in
\RR. If $C_x \neq C_y$ we create a new merge vertex $v$ in \RR with time $t_v =
t$, add edges $(v_x,v)$ and $(v_y,v)$ to \RR labeled $C_x$ and $C_y$,
respectively. If $C_x = C_y$ we do not change \RR. Finally, we add the edge
$(x,y)$ to $G$ (which may cause an update to $F$), and update the mapping $M$.

At a disconnect event we first query $F$ to find the component $C$ currently
containing $x$ and $y$. Using $M$ we locate the vertex $u$ corresponding to
$C$. Next, we delete the edge $(x,y)$ from $G$, and again query $F$. Let $C_x$
and $C_y$ denote the components containing $x$ and $y$, respectively. If $C_x =
C_y$ we are done, meaning $x$ and $y$ are still $\eps$-connected. Otherwise
we add a new split vertex $v$ to \RR with time $t_v = t$, and an edge $e =
(u,v)$ with $C_e = C$ as its component. We update $M$ accordingly.

Finally, we add an end vertex $v$ for each component $C$ in $F$ with $t_v=t_\tau$.
We connect the vertex $u = M(C)$ to $v$ by an edge $e = (u,v)$
and let $C_e=C$ be its component.

\paragraph{Analysis} We need $O(\tau n^2\log n)$ time to compute all $O(\tau
n^2)$ events and sort them according to increasing time. To handle an event we
query $F$ a constant number of times, and we insert or delete an edge in
$F$. These operations all take $O(\log n)$ time. So the total time required
for building \RR is $O(\tau n^2 \log n)$.
\begin{theorem}
  \label{thm:compute_Reeb_graph}
  Given a set \X of $n$ entities, in which each entity travels along a
  trajectory of $\tau$ edges, the Reeb graph $\RR = (V,E)$ has $O(\tau n^2)$
  vertices and edges, and can be computed in $O(\tau n^2 \log n)$
  time.
\end{theorem}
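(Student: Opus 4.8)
The size bounds are exactly the statement of Theorem~\ref{th:size_Reeb_graph}, so only the running time remains to be argued. My plan is to recast the evolution of the components as an offline, event-driven simulation and to bound the cost of each phase separately: generating and sorting the events, and then sweeping through them while maintaining connectivity.

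First I would enumerate the events. Restricted to a single trajectory edge, the two entities $x$ and $y$ move along line segments, so the distance between their centres is a simple function of $t$ that attains the value $2\eps$ at most twice; these crossing times (one connect, one disconnect) can be found in $O(1)$ time. Ranging over all $O(\tau n^2)$ pairs-per-edge gives $O(\tau n^2)$ events, computed in $O(\tau n^2)$ time and sorted in $O(\tau n^2 \log n)$ time. I would then process the events in increasing time order, maintaining the graph $G = (\X, Z)$ of directly-connected pairs together with its connected components, emitting a merge vertex of $\RR$ at each connect event that joins two distinct components and a split vertex at each disconnect event that breaks one component in two. The mapping from current components to vertices of $\RR$ is updated in $O(1)$ per event.

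The one delicate point is maintaining connectivity under \emph{deletions}. Here I would follow \citet{parsa2012connectivity}: since all deletion times are known in advance, I keep a maximum-weight spanning forest $F$ of $G$ in which the weight $w(e)$ of an edge $e$ is the time at which $e$ is deleted, and I store $F$ as an ST-tree~\cite{sleator1983sttrees}. On a connect event I insert $(x,y)$, and if its endpoints already lie in the same tree I swap it in place of the minimum-weight edge on the tree path between them whenever its weight is larger; this preserves the invariant that $F$ retains the longest-surviving edges. Each event triggers a constant number of ST-tree queries, links, and cuts, costing $O(\log n)$.

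The heart of the correctness argument, and the part I expect to need the most care, is showing that this forest reports splits \emph{exactly}. When a disconnect event at time $t$ removes a non-forest edge, the components are unchanged and no split is reported. When it removes a forest edge $e$, whose weight is precisely $t$, I must argue that no other edge of $G$ still present at time $t$ reconnects the two halves of $F - e$. Suppose such an edge $g$ existed. Being present at time $t$, its deletion time satisfies $w(g) \ge t = w(e)$. But $g$ is a non-forest edge, hence the lightest edge on the fundamental cycle it closes in $F$; since $e$ lies on that cycle this forces $w(g) < w(e)$, and the general-position assumption that no two events coincide rules out equality. This contradiction shows the split is genuine, so re-querying $F$ detects it in $O(\log n)$ time. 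Consequently each of the $O(\tau n^2)$ events costs $O(\log n)$, the $O(n^2)$ brute-force initialization at $t_0$ is subsumed, and the total running time is $O(\tau n^2 \log n)$, as claimed.
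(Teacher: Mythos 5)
Your proposal is correct and takes essentially the same approach as the paper: enumerate and sort the $O(\tau n^2)$ connect/disconnect events, then sweep through them maintaining the directly-connected graph via a maximum-weight spanning forest whose edge weights are deletion times, stored as an ST-tree, at $O(\log n)$ per event. The only difference is one of detail, not of route: you spell out the swap invariant on insertions and the cycle-property argument showing that forest-edge deletions report splits exactly, which the paper delegates to its citation of \citet{parsa2012connectivity}.
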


\subsection{Computing the Maximal Groups}

We now show how to compute all maximal groups using the Reeb graph $\RR =
(V,E)$. We will ignore the requirements that each maximal group
should contain at least $m$ entities and have a minimal duration of
$\delta$. That is, we assume $m=1$ and $\delta=0$.
It is easy to adapt the algorithm for larger values.

\paragraph{Labeling the Edges} Our algorithm labels each edge $e = (u,v)$ in
the Reeb graph with a set of maximal groups \groups{e}. The groups $G \in
\groups{e}$ are those groups for which we have discovered that $G$ is a maximal
group at a time $t \leq t_u$. Each maximal group $G$ becomes maximal at a
vertex, either because a merge vertex created $G$ as a new group that is maximal,
or because $G$ is now a maximal set of entities that is still together after a
split vertex. This means we can compute all maximal groups as follows.

We traverse the set of vertices of \RR in topological order. For every vertex $v$ we
compute the maximal groups on its outgoing edge(s) using the information on its
incoming edge(s).

If $v$ is a start vertex it has one outgoing edge $e = (v,u)$. We set
\groups{e} to $\{(\comp{e},t_v)\}$ where $t_v=t_0$.
If $v$ is a merge vertex it has two incoming edges, $e_1$ and $e_2$. We
propagate the maximal groups from $e_1$ and $e_2$ on to the outgoing edge $e$, and we
discover $(\comp{e},t_v)$ as a new maximal group. Hence $\groups{e} = \groups{e_1} \cup
\groups{e_2} \cup \{(\comp{e},t_v)\}$.

\begin{wrapfigure}[12]{r}{0.46\textwidth}
  \centering
  \includegraphics{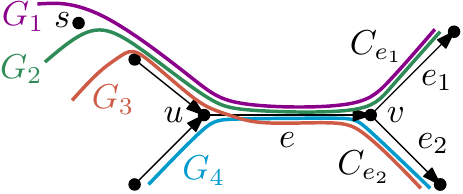}
  \caption{After split vertex $v$, \groups{e_1} contains the groups $C_{e_1} =
    G_1 \cup G_2$ (with starting time $t_s$), $G_1$, and $G_2$. Maximal groups $C_{e_2}
    = G_3 \cup G_4$ (with starting time $t_u$), $G_3$, and $G_4$ go to
    $e_2$. The maximal groups $C_e$ and $G_1 \cup G_2 \cup G_3$ end at $v$. }
  \label{fig:split_vertex}
\end{wrapfigure}
If $v$ is a split vertex it has one incoming edge $e$, and two outgoing edges
$e_1$ and $e_2$. A maximal group $G$ on 
$e$ may end at $v$, continue
on $e_1$ or $e_2$, or spawn a new maximal group $G' \subset G$ on either
$e_1$ or $e_2$. In particular, for any group $G'$ in \groups{e_i}, there is a
group $G$ in \groups{e} such that $G' = G \cap C_i \neq \emptyset$. The
starting time of $G'$ is $t' = \min \{ t \mid (G,t) \in \groups{e} \land G'
\subseteq G \}$. Thus, $t'$ is the first time $G'$ was part of a maximal group on
$e$. Stated differently, $t'$ is the first time $G'$ was in a component on a path
to $v$. Fig.~\ref{fig:split_vertex} illustrates this case.
If $v$ is an end vertex it has no outgoing edges. So there is nothing to be
done.

Fig.~\ref{fig:example} shows a complete example of a Reeb graph after labeling
the edges with their maximal groups.

\begin{figure*}[b]
  \centering
  \includegraphics{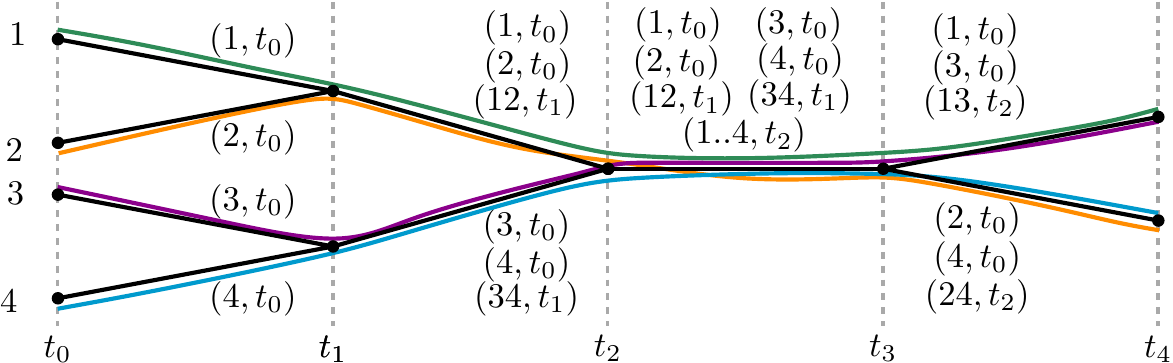}
  \caption{The maximal groups as computed by our algorithm (a set $\{i,j,k\}$
    is denoted by $ijk$). }
  \label{fig:example}
\end{figure*}

\paragraph{Storing the Maximal Groups} We need a way to store the maximal
groups \groups{e} on
an edge $e=(u,v)$ in such a way that we can efficiently compute the set(s) of
maximal groups on the outgoing edge(s) of a vertex $v$. We now show that we can use a
tree $\T_e$ to represent \groups{e}, with which we can handle a merge vertex in $O(1)$
time, and a split vertex in $O(k)$ time, where $k$ is the number of entities
involved. The tree uses $O(k)$ storage.

We say a group $G$ is a \emph{subgroup} of a group $H$ if and only if
$G\subseteq H$ and $I_H\subseteq I_G$. For example, in
Fig.~\ref{fig:example_groups} $\{x_1,x_2\}$ is a subgroup of $\{x_1,..,x_4\}$.
Note that both $G$ and $H$ could be maximal.

\begin{lemma}
  \label{lem:candidate_groups_are_subsets}
  Let $e$ be an edge of \RR, and let $S$ and $T$ be maximal groups in
  \groups{e} with starting times $t_S$ and $t_T$, respectively.  There is also
  a maximal group $G \supseteq S\cup T$ on $e$ with starting time $t_G \geq
  \max(t_S,t_T)$, and if $S \cap T \neq \emptyset$ then $S$ is a
  subgroup of $T$ or vice versa.
\end{lemma}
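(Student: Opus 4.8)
The plan is to reduce both statements to a single static fact about maximal groups that live on $e$ together with one easy induction. Throughout I will use two observations about any maximal group $G$ with $G\subseteq\comp{e}$ and $I_e\subseteq I_G$: first, $I_G$ is \emph{exactly} the maximal interval containing $I_e$ on which the set $G$ is $\eps$-connected --- for if $G$ were $\eps$-connected on a strictly larger interval, that interval would give a group covering $G$, contradicting maximality; second, every element of $\groups{e}$ is a subset of $\comp{e}$ (immediate from the labeling rules).

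For the first part I would simply take $G=\comp{e}$. Then $S\cup T\subseteq\comp{e}$, and $\comp{e}$ is itself a maximal group on $e$: at every $t\in I_e$ the set $\comp{e}$ is a whole connected component, so no strictly larger set is $\eps$-connected there, and its interval is maximal for its own entity set. For the starting time, note that whenever $\comp{e}$ is $\eps$-connected so are its subsets $S$ and $T$; hence the connectedness interval of $\comp{e}$ is contained in both $I_S$ and $I_T$, which gives that the starting time of $\comp{e}$ is at least $\max(t_S,t_T)$ (in fact this shows $S$ and $T$ are themselves subgroups of $\comp{e}$).

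For the second part I would first show that, as a family of \emph{sets}, $\groups{e}$ is laminar: any two members are disjoint or nested. I prove this by induction in the topological order used to compute the labels. At a start vertex $\groups{e}$ is a single set. At a merge vertex $\groups{e}=\groups{e_1}\cup\groups{e_2}\cup\{\comp{e}\}$; the members of $\groups{e_1}$ and $\groups{e_2}$ lie in the disjoint components $\comp{e_1}$ and $\comp{e_2}$, so any cross pair is disjoint, each side is laminar by induction, and $\comp{e}$ contains all of them, so the union is laminar. At a split vertex every set in $\groups{e_i}$ has the form $G\cap C_i$ with $G\in\groups{e}$ and $C_i$ fixed, and intersecting a laminar family with a fixed set preserves laminarity. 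Granting laminarity, if $S\cap T\neq\emptyset$ then without loss of generality $S\subseteq T$; since $T$ is $\eps$-connected exactly on $I_T\supseteq I_e$ and $S\subseteq T$, the set $S$ is $\eps$-connected throughout $I_T$, so maximality of $I_S$ forces $I_T\subseteq I_S$. Thus $S\subseteq T$ and $I_T\subseteq I_S$, i.e.\ $S$ is a subgroup of $T$.

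The main obstacle is pinning down precisely which maximal groups lie in $\groups{e}$, and hence making the split step rigorous: a group can already be $\eps$-connected inside $\comp{e}$ long before the algorithm ``discovers'' it at a later split, so the naive description ``maximal and living on $e$'' is too generous and one must argue about the recursively defined sets rather than about the trajectories directly. What makes me optimistic is that the set-level induction never mentions \emph{when} a group is discovered, and the interval relation is then forced purely by maximality; the only delicate point is that intersecting with $C_i$ may collapse several parents onto the same child set, but this merely shrinks the family and cannot destroy nesting.
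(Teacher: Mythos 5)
Your proof is correct, and for the part that matters it takes a genuinely different route from the paper. The first statement is handled the same way in both (take $G=\comp{e}$); you merely replace the paper's ``by construction $\comp{e}$ has the largest starting time'' with a direct argument via maximal connectedness intervals, which is fine. For the second statement the paper argues directly by contradiction: w.l.o.g.\ $t_S\leq t_T$, from $t_T$ onwards the entities of $S$ lie in a single component and so do those of $T$, these components coincide since $S\cap T\neq\emptyset$, and maximality of $T$ then forces $S\subseteq T$. You instead prove by induction over the topological order that each label set \groups{e} is laminar (singleton at a start vertex; union of two laminar families on disjoint ground sets plus $\comp{e}$ at a merge; intersection of a laminar family with a fixed set at a split), and only then invoke maximality to get $I_T\subseteq I_S$. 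Your closing worry---that ``maximal and living on $e$'' is too generous and one must argue about the recursively defined sets---is exactly right, and it is what your approach buys: the paper's argument never uses membership in \groups{e}, yet the claim is false for arbitrary maximal groups whose intervals contain $I_e$. Concretely, take entities $a,b,c$ where $a,b$ become $\eps$-connected at $t_1$, $c$ joins at $t_2$, $a$ leaves at $t_3$, and $b,c$ separate at $t_4$: then $S=\{a,b\}$ (maximal on $[t_1,t_3]$) and $T=\{b,c\}$ (maximal on $[t_2,t_4]$) both contain the interval of the edge $e$ with $\comp{e}=\{a,b,c\}$ and share $b$, but neither is a subgroup of the other. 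The lemma survives only because $\{b,c\}$ is not discovered until the split at $t_3$, so $\{b,c\}\notin\groups{e}$---precisely the information your induction exploits and the paper's proof glosses over (its assertion that $S$ stays in one component ``at all times $t\geq t_T$'' silently extends past the end of $I_S$, which is exactly where the generous reading breaks). So your proof is not just a different decomposition; it is the more watertight one.
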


\begin{proof}
  The first statement is almost trivial. Clearly, $S,T \subseteq C_e$ and hence
  $S \cup T \subseteq C_e$. Component $C_e$ itself is also a maximal group on $e$.
  By construction $C_e$ must have the largest starting time $t$ of the groups in
  \groups{e}. Hence $t_G \geq \max(t_S,t_T)$.

  We prove the second statement by contradiction: assume $S \cap T \neq
  \emptyset$, and $S \not \subseteq T$ or vice versa. Assume w.l.o.g.\ that $t_S
  \leq t_T$. So the entities in $S$ are all in a single component at all times
  $t \geq t_T\geq t_S$. At any time $t \geq t_T$ all entities in $T$ are also in a
  single component. Since $S \cap T \neq \emptyset$ this must be the same
  component that contains $S$. Hence $S \subseteq T$, which together
  with $t_S \leq t_T$ proves the statement.
\end{proof}
We represent the groups \groups{e} on an edge $e \in E$ by a tree $\T_e$ (see
Fig.~\ref{fig:grouping_tree}). We call
this the \emph{grouping tree}. Each node $v$ represents a group $G_v \in
\groups{e}$. The children of a node $v$ are the largest subgroups of
$G_v$. From Lemma~\ref{lem:candidate_groups_are_subsets} it follows that any
two children of $v$ are disjoint. Hence an entity $x \in G_v$ occurs in only
one child of $v$. Furthermore, note that the starting times are monotonically
decreasing on the path from the root to a leaf: smaller groups started
earlier. A leaf corresponds to a smallest maximal group on $e$: a singleton
set with an entity $x \in C_e$. It follows that $\T_e$ has $O(n)$ leaves,
and therefore has size $O(n)$. Note, however, that
the summed sizes of all maximal groups can be quadratic.

\begin{wrapfigure}[9]{r}{0.35\textwidth}
  \centering
  \vspace{-.5\baselineskip}
  \includegraphics{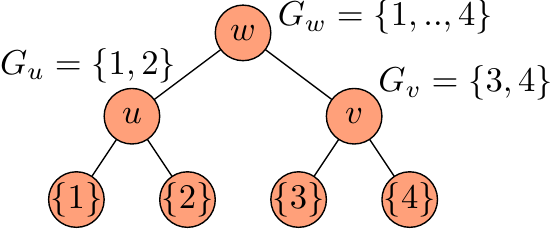}
  \caption{The grouping tree for the edge between $t_2$ and $t_3$ in
    Fig.~\ref{fig:example}.}
  \label{fig:grouping_tree}
\end{wrapfigure}

\paragraph{Analysis} We analyze the time required to label each edge $e$ with a
tree $\T_e$ for a given Reeb graph $\RR = (V,E)$. Topologically sorting the
vertices takes linear time. So the running time is determined by the processing
time in each vertex, that is, computing the tree(s) $\T_e$ on the outgoing
edge(s) $e$ of each vertex. Start, end, and merge vertices can be handled in $O(1)$
time: start and end vertices are trivial, and at a merge vertex $v$ the tree
$\T_e$ is simply a new root node with time $t_v$ and as children the (roots of the)
trees of the incoming edges. At a split vertex we have to split the tree $\T = \T_{(u,v)}$
of the incoming edge $(u,v)$ into two trees for the outgoing edges of $v$. For this,
we traverse $\T$ in a bottom-up fashion, and for each node, check whether it induces
a vertex in one or both of the trees after splitting.
This algorithm runs in $O(|\T|)$ time. Since $|\T| = O(n)$ the total running time of our
algorithm is $O(n|V|) = O(\tau n^3)$.

\paragraph{Reporting the Groups}
We can augment our algorithm to report all maximal groups at split and end vertices.
The main observation is that a maximal group ending at a split vertex $v$, corresponds
exactly to a node in the tree $\T_{(u,v)}$ (before the split) that has entities
in leaves below it that separate at $v$.
The procedures for handling split and end vertices can easily
be extended to report the maximal groups of size at least $m$ and duration at
least $\delta$ by simply checking this for each maximal group.
Although the number of maximal groups is $O(\tau n^3)$
(Theorem~\ref{th:upperbound_num_groups}), the summed size of all maximal groups
can be $\Omega(\tau n^4)$.
The running time of our algorithm is $O(\tau n^3+N)$, where $N$ is the total output size.

\begin{theorem}
  \label{thm:compute_groups}
  Given a set \X of $n$ entities, in which each entity travels along a
  trajectory of $\tau$ edges, we can compute all maximal groups in
  $O(\tau n^3+N)$ time, where $N$ is the output size.
\end{theorem}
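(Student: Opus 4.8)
The plan is to assemble the theorem from three ingredients already established above: the construction of the Reeb graph, the laminar structure of the maximal groups on each edge, and a per-vertex cost analysis of the labeling algorithm, followed by an output-sensitive reporting step. First I would invoke Theorem~\ref{thm:compute_Reeb_graph} to build $\RR = (V,E)$ with $|V|,|E| = O(\tau n^2)$ in $O(\tau n^2\log n)$ time. Since $\log n = O(n)$, this preprocessing cost is subsumed by the target bound $O(\tau n^3)$ and will not dominate.

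Next I would analyze the labeling phase, in which the algorithm visits the vertices of \RR in topological order and maintains, for each edge $e$, the grouping tree $\T_e$ representing \groups{e}. The key structural input is Lemma~\ref{lem:candidate_groups_are_subsets}: any two maximal groups on $e$ are either nested or disjoint, so \groups{e} is a laminar family and $\T_e$ has $O(n)$ leaves and hence $O(n)$ total size. I would then bound the per-vertex work by case. Start and end vertices are trivial. A merge vertex only creates a fresh root whose children are the roots of the two incoming trees, which is $O(1)$. A split vertex requires splitting the incoming tree $\T_{(u,v)}$ into the two outgoing trees, which a single bottom-up traversal accomplishes in $O(|\T_{(u,v)}|) = O(n)$ time; the correct starting time $t'$ of each spawned subgroup is the minimum starting time along the path to its representative node, which the traversal maintains. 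Multiplying the $O(n)$ per-vertex cost by the $O(\tau n^2)$ vertices yields $O(\tau n^3)$ for building all grouping trees.

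Finally I would add the reporting step. At each split and end vertex, the maximal groups ending there correspond exactly to nodes of $\T_{(u,v)}$ whose descendant leaves separate at that vertex; enumerating the entities of each such group, and discarding those with fewer than $m$ entities or duration below $\delta$ by a simple test per group, costs time proportional to the total number of reported entities. By definition of the output size this is $O(N)$, giving the overall bound $O(\tau n^3 + N)$.

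The part that needs the most care is the split-vertex analysis: one must argue that its cost is $O(n)$ and \emph{independent} of how many maximal groups happen to be involved at that vertex, which can be $\Omega(n)$ with summed size $\Omega(n^2)$. This is precisely what the compact laminar-tree representation of size $O(n)$ buys us. Keeping the structural cost $O(\tau n^3)$ strictly separate from the enumeration cost $O(N)$ is what makes the bound output-sensitive, rather than forcing us to pay the $\Omega(\tau n^4)$ summed size of all maximal groups.
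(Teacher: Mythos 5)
Your proposal follows essentially the same route as the paper: build the Reeb graph via Theorem~\ref{thm:compute_Reeb_graph}, label its edges in topological order with the laminar grouping trees justified by Lemma~\ref{lem:candidate_groups_are_subsets}, charge $O(1)$ per merge vertex and $O(n)$ per split vertex for a total of $O(\tau n^3)$, and report groups output-sensitively at split and end vertices. The argument is correct, and your closing remark about keeping the $O(n)$-size tree representation separate from the possibly $\Omega(\tau n^4)$ summed group sizes matches the paper's own justification for the output-sensitive bound.
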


\section{Robustness}
\label{sec:Robustness}

The grouping structure definition we have given and analyzed has a number
of good properties. It fulfills monotonicity,
and in the previous sections we showed that there are only polynomially
many maximal groups, which can be computed in polynomial time as well.
In this section we study the property of robustness, which our definition
of grouping structure does not have yet.
Intuitively, a robust grouping structure ignores short interruptions of
groups, as these interruptions may be insignificant at the
temporal scale at which we are studying the data. For example, if we
are interested in groups that have a duration of one hour or more, we may
want to consider interruptions of a minute or less insignificant.

We introduce a new temporal parameter $\alpha$, which is related to the
temporal scale at which the data is studied. Our robust grouping structure
should ignore interruptions of duration at most $\alpha$. We realize this by
letting the precise moment of events be irrelevant beyond a value depending on
$\alpha$. Events that happen within $\alpha$ time of each other may cancel out,
or their order may be exchanged.  The objective is to incorporate $\alpha$ into
our definitions while maintaining the properties that we have for the
(non-robust) grouping structure.  Note that $\alpha$ is another parameter that
allows us to obtain more generalized views of the grouping structure by
increasing its value.  Obtaining generalized views in this way is related to
the concept of persistence in computational
topology~\cite{eh-cti-10,edelsbrunner2002persistence}.


A possible definition of a robust grouping structure is based on the following
intuition: A set of entities forms a robust group on $I$ as long as every
interval $I' \subset I$ on which its entities are not in the same component
has length at most $\alpha$. More formally: we say $G$ is a \emph{robust group}
on time interval $I$ if and only if: (\textit{i}) $G$ contains at least $m$
entities, (\textit{ii}) $I$ has length at least $\delta$, and (\textit{iii})
for any time $t \in I$ there is a time $t' \in [t-\alpha/2,\; t+\alpha/2]$ and
a component $C \in \C(t')$ such that $G \subseteq C$.
Unfortunately, we can show that even determining whether there is a robust group of
size $k$ is NP-complete (see Appendix~\ref{app:np_complete}).

We consider a second definition for a robust group, which we will use
from now on. Two entities are \emph{$\alpha$-relaxed directly
  connected} at time $t$ if and only if they are directly connected at some
time $t'\in [t-\alpha/2,\; t+\alpha/2]$. Two entities $x$ and $y$ are
\emph{$\alpha$-relaxed $\eps$-connected} at time $t$ if there is a sequence
$x=x_0,..,x_j=y$ such that $x_i$ and $x_{i+1}$ are $\alpha$-relaxed directly
connected. Note that the precise times may be different for different pairs
$x_i$ and $x_{i+1}$, as long as each time is in the interval $[t-\alpha/2,\;
t+\alpha/2]$. A maximal set of $\alpha$-relaxed $\eps$-connected entities at
time $t$ is an \emph{$\alpha$-relaxed component}, or \emph{$\alpha$-component}
for short.  An $\alpha$-component at time $t$ corresponds to connected
$3D$-component in a horizontal slice of \M with thickness $\alpha$ and centered at $t$
(see Fig.~\ref{fig:alpha_component}).


\begin{wrapfigure}[7]{r}{0.4\textwidth}
  \centering
  \vspace{-1.25\baselineskip}
  \includegraphics{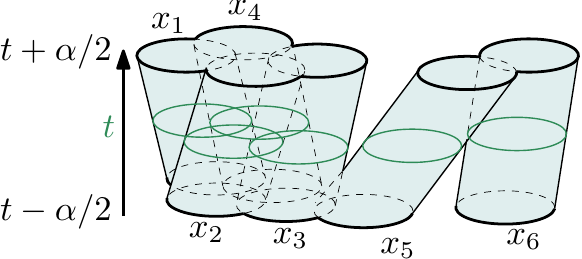}
  \vspace{-.75\baselineskip}
  \caption{An $\alpha$-component at time $t$.} 
  \label{fig:alpha_component}
\end{wrapfigure}
A subset $G$ of $k$ entities is a \emph{robust group} if and only if it is a
group by the definition in the introduction, but where ``component'' is
replaced by ``$\alpha$-component'' in condition (\textit{iii}). This
immediately leads to the definition of maximal robust groups and a robust
grouping structure. The robust grouping structure has the property of
monotonicity in the new parameter $\alpha$ as well.
Note that every group which is a robust group according to the first definition, is also a robust group according to the second definition.
%
For instance, in Fig.~\ref{fig:alpha_component},
entities $x_1,..,x_6$ form a component by the second definition, but not by the first.


\subsection{Computation of Maximal Robust Groups}

We can compute all maximal robust groups according to the
(second) definition. The idea is to modify the Reeb graph
to a version that is parametrized by $\alpha$ and captures exactly the
robust grouping structure for parameter $\alpha$.

Let $\RR$ be the Reeb graph that we used for the grouping structure
without considering robustness. Note that this is the same as assuming
$\alpha=0$ in the definition of the robust grouping structure, and we
let $\RR_0=\RR$.
For $\alpha>0$ we define the Reeb graph parametrized in $\gamma$ as
$\RR_\gamma$ 
by imagining a process that changes the Reeb graph for a growing
parameter $\gamma$, starting with $\RR_0$ and ending with $\RR_{\alpha/2}$.

We observe that a new $\alpha$-component starts at time $\alpha/2$ before two
regular components merge and form a new component.
Symmetrically, an $\alpha$-component ends due to a split at time $\alpha/2$
after a regular component splits. Both facts follow from the new definition
of $\alpha$-relaxed directly connected. It implies that in the process
that maintains $\RR_\gamma$ for growing $\gamma$,
the split nodes move forward in time, zippering together the outgoing edges,
and the merge nodes move backward in time, zippering together the incoming edges.
All nodes move at the same rate in $\gamma$, which implies that in the process,
the only event where the Reeb graph changes structurally is when an (earlier)
split node encounters a (later) merge node.
This can happen only if they are endpoints of the same edge of the Reeb graph.
The encounter is either a \emph{passing} or a \emph{collapse} (see
Fig.~\ref{fig:encounter}).

\begin{wrapfigure}[9]{r}{0.3\textwidth}
  \centering
  \vspace{-.75\baselineskip}
  \includegraphics{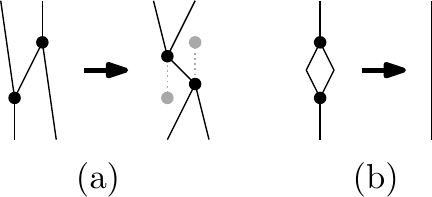}
  \caption{Passing encounter, before and after (a). Collapse encounter,
before and after (b).}
  \label{fig:encounter}
\end{wrapfigure}
Both encounters lead to new edges in the Reeb graph and can thus give rise
to new encounters
when growing $\gamma$ further.  The collapse
encounter reduces the complexity of the Reeb graph: two nodes of degree $3$
disappear and four edges become a single edge. The collapse event is exactly
the situation where a component splits and merges again, so by removing a
split-merge pair involving the same entities we ignore the
temporary split of a component (or group). 

A passing encounter maintains the complexity of the Reeb graph. Before the passing
encounter, a part of one group splits and merges with a different group.
After the passing encounter, the two groups merge (for a short time) and then split again.
This situation is also captured in Fig.~\ref{fig:alpha_component}.

Next, we show that there are $O(\tau n^3)$ encounter events in the Reeb graph
of the robust version of the trajectory grouping structure, and this bound is
tight in the worst case.

\begin{lemma}
  \label{lem:lowerbound_encounters}
  For some set \X of $n$ entities, in which each entity travels along a trajectory
  of $\tau$ edges, the structure of the Reeb graph $\RR_\gamma$ of \X
  changes $\Omega(\tau n^3)$ times when increasing $\gamma$ from zero to
  infinity.
\end{lemma}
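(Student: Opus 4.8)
The plan is to mirror the two earlier lower-bound arguments: I would design a gadget that lives on a constant number of trajectory edges, show that it forces $\Omega(n^3)$ structural changes of $\RR_\gamma$ as $\gamma$ grows, and then concatenate $\Theta(\tau)$ time-translated copies of it, resetting the entities to their starting positions in between, exactly as in Lemma~\ref{lem:lowerbound_size_Reeb_graph} and Lemma~\ref{lem:lowerbound_num_groups}. Since the copies are disjoint in time, their encounters cannot interfere, so the counts add up to $\Omega(\tau n^3)$. Thus the whole problem reduces to building a single gadget with $\Omega(n^3)$ encounters.

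For the gadget I would start from the $S \cup D$ construction of Lemma~\ref{lem:lowerbound_num_groups}, which already packs $\Theta(n^2)$ Reeb-graph vertices and $\Omega(n^3)$ maximal groups into one pair of consecutive time stamps. The one feature it lacks is \emph{split-then-merge} adjacencies, which are exactly the configurations on which an encounter (Fig.~\ref{fig:encounter}) can occur: a passing or a collapse needs a split node whose outgoing edge runs into a later merge node. I would create such adjacencies by letting the moving discs in $D$ detach from one stationary blob and reattach to a second, mirrored blob, or equivalently by exploiting the reset phase in which the discs re-cross $S$ in reverse. The key mechanism is then a \emph{cascade}: as $\gamma$ increases, every split node travels forward in time while every merge node travels backward, both at the same rate. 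When a split node $s$ passes the merge node at the other end of its edge, the passing leaves $s$ on a new edge that again runs into the next merge node in the chain, so $s$ can pass a whole sequence of $\Theta(n)$ merge nodes, one structural change at a time. Arranging $\Theta(n^2)$ such split nodes, each threading a chain of $\Theta(n)$ merges, yields $\Theta(n^3)$ encounters in the gadget.

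The main obstacle is the timing analysis that makes this cascade rigorous. I would have to choose the geometric parameters (the blob spacing, the inter-disc gaps $h_i$, and the disc speeds) so that, first, all connect/disconnect events fall in the intended order and no three entities become $\eps$-(dis)connected simultaneously, and second, the meeting value $\gamma = (t_{\mathrm{merge}} - t_{\mathrm{split}})/2$ of each successive passing is strictly larger than that of the previous passing in its chain, so that the encounters occur at distinct, increasing values of $\gamma$ and each one is a genuine structural change of $\RR_\gamma$. I would also need to check that no collapse prematurely annihilates a split--merge pair that a later passing depends on, and that the uniform ``all nodes move at the same rate in $\gamma$'' picture underlying $\RR_\gamma$ holds throughout the cascade. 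Once these orderings are secured, summing the encounters over the $\Theta(n^2)$ chains gives $\Omega(n^3)$ per gadget and hence $\Omega(\tau n^3)$ overall, matching the claimed bound.
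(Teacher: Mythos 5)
Your overall strategy, the cascade mechanism (splits moving forward, merges moving backward, a split passing a chain of merges one structural change at a time), and the target counting ($\Theta(n^2)$ splits each passing $\Theta(n)$ merges, repeated over $\Theta(\tau)$ time-translated copies) are exactly the paper's. However, your gadget design rests on a false premise, and the fix you build on it creates a genuine quantitative gap. The construction of Lemma~\ref{lem:lowerbound_num_groups} does \emph{not} lack split-then-merge adjacencies: because the moving discs travel at height $\nu$ with $\eps < \nu < 2\eps$ over the row $S$ of stationary discs (spaced just under $2\eps$ apart), each moving disc becomes directly connected to and disconnected from the single connected component $S$ once per stationary disc it sweeps over. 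This is precisely what produces the $\Theta(n)$ \emph{rounds} per time interval, each consisting of $k = n/4$ merges followed by $k$ splits; the splits of round $j$ are then immediately followed by the merges of round $j+1$ on the same path of the Reeb graph (all these vertices lie on the one path carrying $S$, and each disc $d_i$ additionally connects its split $u_i$ to its re-merge $v_i$ by a singleton handle edge). So the adjacencies you set out to manufacture are already there, $\Theta(n^2)$ of them per interval.

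Your proposed replacements -- detaching from one blob and reattaching to a second mirrored blob, or reattaching on the reverse crossing during the reset phase -- give each moving disc only $O(1)$ split/merge cycles per gadget. That yields only $\Theta(n)$ split nodes per gadget, hence at most $\Theta(n^2)$ passings per gadget and $\Theta(\tau n^2)$ overall: a factor $n$ short of the claim. The sentence ``Arranging $\Theta(n^2)$ such split nodes, each threading a chain of $\Theta(n)$ merges'' states what is needed, but nothing in your described geometry produces $\Theta(n^2)$ splits within a constant number of trajectory edges; the intermittent-connection feature you overlooked is the only thing that does. The paper's proof is then short: within one interval, group the path vertices into $\Theta(n)$ blocks of splits $u_1 < \dots < u_k$ followed by merges $v_1 < \dots < v_k$; since $u_i$ meets $v_j$ at $\gamma = (t_{v_j}-t_{u_i})/2$, these encounters occur in order of increasing time gap, so $u_i$ passes $v_1,\dots,v_{i-1}$ (each a passing, as only one edge joins them) and then collapses with $v_i$ (by then joined by both the contracted path edge and the handle of $d_i$), giving $\sum_{i=1}^{k}(i-1) = \Omega(n^2)$ encounters per block and $\Omega(n^3)$ per interval. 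This ordering argument also supplies, essentially for free, the timing analysis you deferred as your ``main obstacle.''
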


\begin{proof}
  We show that there is a set of $n$
  trajectories, each consisting of $\tau$ edges, for which there are
  $\Omega(\tau n^3)$ encounter events. The lemma then follows.

  We use the same construction as in Lemma~\ref{lem:lowerbound_num_groups}. So in all time
  intervals $[t_{2i},t_{2i+1}]$ we have a set $S$ of $3n/4$ stationary
  entities/discs and a set $D=\{d_1,..,d_{n/4}\}$ entities, ordered from right
  to left, that move to the right in such a way that $d_i$ becomes directly
  (dis)connected with $S$ before $d_{i+1}$ (see
  Fig.~\ref{fig:lowerbound_num_groups}). Let $t_a$ be the first time at which
  $d_{n/4}$ becomes directly connected with $S$, and let $t_b$ denote the last
  time $d_1$ becomes directly disconnected with $S$. We now show that the part
  of Reeb-graph $\RR'$ corresponding to the interval $(t_a,t_b)$ already yields
  $\Omega(n^3)$ encounter events. We note that no other encounter events
  involving other parts of the Reeb-graph can interfere with the encounter
  events in $\RR'$.

  In between $t_a$ and $t_b$ every disc $d_i$ becomes directly (dis)\-connected
  with $S$ $\Omega(n)$ times. So $\RR'_\gamma$ initially contains of a path $P$
  of $\Omega(n^2)$ edges. Each edge has at least the set of entities $S$
  associated with it, and possibly other entities as well. The vertices on $P$
  can be grouped in $\Omega(n)$ sequences of $k=n/4$ split vertices
  $u_1,..,u_k$ followed by $k$ merge vertices $v_1,..,v_k$. At vertex $u_i$
  entity $x_i$ splits from $S$ and at $v_i$ entity $x_i$ merges with $S$. See
  Fig.~\ref{fig:lowerbound_passing_encounters}.

  \begin{figure}[h]
    \centering
    \includegraphics{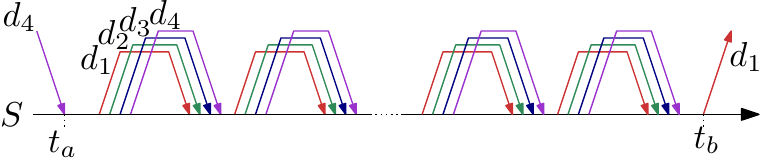}
    \caption{The part of the Reeb-graph that yields $\Omega(n^3)$ encounter
      events (for $n=16$). }
    \label{fig:lowerbound_passing_encounters}
  \end{figure}

  By increasing $\gamma$ each split vertex $u_i$ will have a passing encounter
  with the merge vertices $v_1,..,v_{i-1}$ before it collapses with
  $v_i$. Hence each sequence involves $\sum_{i=1}^k (i-1) = \Omega(n^2)$
  encounter events. Since there are $\Omega(n)$ such sequences this gives
  $\Omega(n^3)$ encounter events in a single timestep, and hence $\Omega(\tau
  n^3)$ in total.
\end{proof}

\begin{theorem}
  \label{thm:upperbound_encounters}
  Let \X be a set of $n$ entities, in which each entity travels along a trajectory of
  $\tau$ edges. The structure of the Reeb graph $\RR_\gamma$ of \X changes at
  most $O(\tau n^3)$ times when increasing $\gamma$ from zero to infinity.
  This bound is tight in the worst case.
\end{theorem}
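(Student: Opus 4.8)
The lower bound $\Omega(\tau n^3)$ is already delivered by Lemma~\ref{lem:lowerbound_encounters}, so the task is purely to establish the matching upper bound $O(\tau n^3)$ on the number of structural changes of $\RR_\gamma$ as $\gamma$ grows from $0$ to $\infty$. My plan is to charge each encounter event to the pair (robust maximal group that ends, vertex where it ends) and then invoke Theorem~\ref{th:upperbound_num_groups}, which already caps the total number of maximal groups at $O(\tau n^3)$. The conceptual bridge is the interpretation the authors themselves give: a collapse encounter is exactly a temporary split that gets ignored, i.e. the disappearance of a split--merge pair on a common edge, while a passing encounter swaps the order of a split and a merge. Both are local, involving the two endpoints of a single edge of the current $\RR_\gamma$.

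\medskip

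First I would count collapse encounters. Each collapse permanently removes a split node and a merge node (two degree-$3$ vertices vanish and four edges fuse into one), so collapses are bounded by the number of degree-$3$ vertices ever created. The initial Reeb graph $\RR_0=\RR$ has $O(\tau n^2)$ vertices by Theorem~\ref{th:size_Reeb_graph}, but new vertices can be spawned by passing encounters, so I cannot bound collapses by the initial count alone; I must bound the two encounter types together. The cleaner approach is to bound \emph{passing} encounters directly, since a passing event is precisely the moment at which a smaller robust maximal group (a subgroup that momentarily detaches from a larger component) ceases to be separately recorded. Concretely, as $\gamma$ increases the set of $\alpha$-relaxed maximal groups can only shrink or get merged, and every passing encounter corresponds to one robust maximal group (valid at some earlier $\gamma$) being absorbed. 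I would formalize this by associating, to each split node $u_i$ at the moment it passes a merge node $v_j$, the distinct maximal group that splits off at $u_i$; passing $v_1,\dots,v_{i-1}$ before collapsing with $v_i$ exactly traces the $i-1$ nested subgroups of that component, mirroring the lower-bound construction.

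\medskip

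The cleanest charging argument is therefore: each encounter (passing or collapse) is the event at which a specific robust maximal group stops being maximal in $\RR_\gamma$, and distinct encounters correspond to distinct (group, ending-vertex) incidences. Since the robust maximal groups for a fixed $\gamma$ are themselves maximal groups in the sense of the non-robust structure on the thickened manifold, the total count of such groups across all $\gamma$ is governed by the same $O(n)$-per-vertex argument used in Theorem~\ref{th:upperbound_num_groups}: at each of the $O(\tau n^2)$ start/merge vertices at most $O(n)$ maximal groups are created over the whole process, and each can participate in at most a constant number of encounters before being absorbed. Summing gives $O(\tau n^3)$.

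\medskip

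The main obstacle I anticipate is the fact that passing encounters \emph{create} new edges and thus new potential encounter sites, so a naive static count of the initial Reeb graph is insufficient; I must argue that the \emph{total} number of split/merge vertices ever existing during the entire growth process is still $O(\tau n^2)$, or else charge encounters to the invariant object (maximal groups) rather than to the transient graph structure. The subtle point is showing that no maximal group is charged twice and that the $O(n)$-per-source-vertex bound survives the reparametrization by $\gamma$ — i.e. that the unraveling-into-a-tree argument of Theorem~\ref{th:upperbound_num_groups} applies uniformly to the family $\{\RR_\gamma\}$ and not just to $\RR_0$. Once that uniformity is in hand, the encounter count inherits the $O(\tau n^3)$ bound directly, and tightness follows from Lemma~\ref{lem:lowerbound_encounters}.
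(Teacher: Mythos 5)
Your proposal has a genuine gap at its core: the charging of encounter events to robust maximal groups is never actually established, and as set up it is circular. Theorem~\ref{th:upperbound_num_groups} bounds the number of maximal groups of a \emph{fixed} Reeb graph by $O(\tau n^3)$; the encounters, however, range over the whole continuum of parameters $\gamma \in [0,\infty)$, and each structural change of $\RR_\gamma$ can create maximal groups that did not exist for smaller $\gamma$. So the size of the family of ``all robust maximal groups over all $\gamma$'' is not controlled by the fixed-$\gamma$ theorem --- indeed, bounding how often the structure changes is exactly what one would need in order to control that family, which is the statement being proved. You acknowledge this yourself (``once that uniformity is in hand\dots''), but that uniformity \emph{is} the theorem; deferring it leaves the passing-event bound unproven, and your auxiliary claims (``distinct encounters correspond to distinct (group, ending-vertex) incidences'', ``each can participate in at most a constant number of encounters'') are asserted, not argued. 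The paper closes this hole with a direct combinatorial argument that has no counterpart in your proposal: it charges each passing event to its split vertex $u$ and proves that $u$ participates in at most $n$ passings, by contradiction --- if $u$ passed merge vertices $v_i$ and $v_j$ (with $i<j$) whose other incoming edges both carry some entity $x$, then on the path from $u$ to $v_j$ there is a split vertex $w$ where $x$ leaves, and the two parallel paths from $w$ to $v_j$ must collapse at some $\hat{\gamma} < \gamma_j$, consuming $v_j$ before $u$ could ever pass it. Since there are $O(\tau n^2)$ split vertices, this gives $O(\tau n^3)$ passings.

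A secondary error compounds this: you assert that passing encounters can spawn new vertices, and on that basis discard the simple bound on collapse events. In fact a passing encounter preserves both the vertex count and the edge count of $\RR_\gamma$ (the paper states ``a passing encounter maintains the complexity of the Reeb graph''); only the adjacencies change, and no vertices are ever created as $\gamma$ grows. Consequently the number of edges is non-increasing, each collapse removes three edges, and the number of collapses is at most $O(\tau n^2)$ outright --- precisely the paper's one-line argument, which your proposal walks away from for an invalid reason. The lower-bound half of your write-up, citing Lemma~\ref{lem:lowerbound_encounters}, is fine.
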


\begin{proof}
  Lemma~\ref{lem:lowerbound_encounters} gives
  a construction that shows that there may be $\Omega(\tau n^3)$
  encounters.

  Since each collapse event decreases the number of edges by three it follows
  the number of collapse events is at most $O(\tau n^2)$. What remains is to
  prove that the number of passing events is $O(\tau n^3)$. Each passing event
  involves a split vertex $u$ and a merge vertex $v$. We now show that there
  are at most $n$ passing events involving a given split vertex $u$. Since
  there are $O(\tau n^2)$ split vertices this means the number of passing
  events is $O(\tau n^3)$.

  Assume by contradiction that there are $k > n$ passing events involving split
  vertex $u$. Let $\gamma_1,..,\gamma_k$ be the values for $\gamma$ for which
  these passing events occur in non-decreasing order, and let $v_1,..,v_k$ be
  the corresponding merge vertices. Just before $u$ passes $v_i$ the edge
  $e=(u,v_i)$ is an incoming edge of $v_i$. Let $X_i$ denote the set of
  entities on the other incoming edge of $v_i$, that is the set of entities
  that merges with $C_e$ at vertex $v_i$ (see
  Fig.~\ref{fig:upperbound_passing_encounters}(a)).

  Since $k > n$ there must be an entity $x$ that $u$ ``passes'' at least
  twice. That is, $u$ passes $v_i$ and $v_j$, with $i < j$, and $x \in X_i$ and
  $x \in X_j$. Now consider the Reeb-graph $\RR_\gamma$ just after $u$ passes
  $v_i$ (which means $\gamma > \gamma_i$). Since $u$ still has to pass $v_j$
  there is a path $Q$ connecting $u$ to $v_j$. By further increasing $\gamma$
  this path will eventually become a single edge $(u,v_j)$, which will flip to
  $(v_j,u)$ when $u$ passes $v_j$ at $\gamma = \gamma_j$.

  Entity $x$ is present at the first vertex of $Q$ (vertex $u$), and it merges
  again with path $Q$ at $v_j$. Clearly, this means that $Q$ contains a split
  vertex $w$ at which $x$ splits from path $Q$ before it can return to $Q$ in
  vertex $v_j$ (see Fig.~\ref{fig:upperbound_passing_encounters} (b)).

  \begin{figure}[h]
    \centering
    \includegraphics{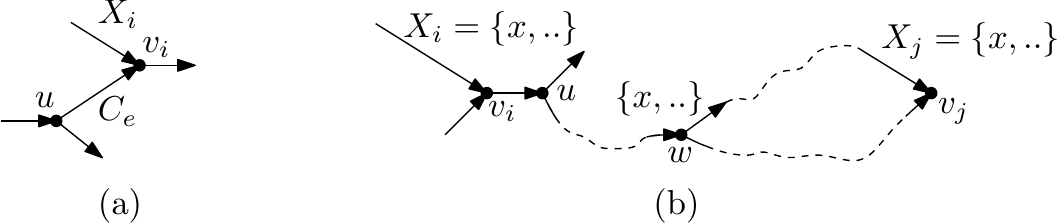}
    \caption{The part of $\RR_\gamma$ before $u$ encounters $v_i$. The set
      $X_i$ merges with $C_e$ at vertex $v_i$ (a). If $x$ merges at both $v_i$
      and $v_j$ it has to leave (split) at a vertex $w$ in between (b).}
    \label{fig:upperbound_passing_encounters}
  \end{figure}

  We now have two paths connecting $w$ to $v_j$: the path that $x$ follows and
  the subpath of $Q$. We again have that by increasing $\gamma$ both paths will
  become singleton edges connecting $w$ to $v_j$. Eventually both these edges
  are removed in a collapse event for some $\hat{\gamma}$. If $w = u$ this means
  $(u,v_j)$ is actually a collapse event instead of a passing
  event. Contradiction. If $w \neq u$ we have that $t_w > t_u$, and therefore
  $\hat{\gamma} < \gamma_j$. The collapse event at $\hat{\gamma}$ will consume both $w$
  and $v_j$, which means $u$ can no longer pass $v_j$. Contradiction. Since
  both cases yield a contradiction we conclude that the number of passing
  events involving $u$ is at most $n$. With $O(\tau n^2)$ vertices this yields
  the desired bound of $O(\tau n^3)$ passing events.
\end{proof}

Algorithmically, we start with the Reeb graph $\RR_0$ and examine each edge.
Any edge that leads from a split node to a merge node and whose duration
is at most $\alpha$ is inserted in a priority queue, where the duration of
the edge is the priority. We handle the encounter events in the correct order,
changing the Reeb graph and possibly inserting new encounter events in the
priority queue. Each event is handled in $O(\log n)$ time since it involves
at most $O(1)$ priority queue operations. Since there are $O(\tau n^3)$ events
(Theorem~\ref{thm:upperbound_encounters}) this takes $O(\tau n^3\log n)$ time in
total.
Once we have the Reeb graph $\RR_{\alpha/2}$, we can associate the trajectories
with its edges as before. The computation of the maximal robust groups is
done in the same way as computing the maximal groups on the
normal Reeb graph $\RR$. We conclude:

\begin{theorem}
  \label{thm:compute_robust_groups}
  Given a set \X of $n$ entities, in which each entity travels along a
  trajectory of $\tau$ edges, we can compute all robust maximal groups in
  $O(\tau n^3 \log n +N)$ time, where $N$ is the output size.
\end{theorem}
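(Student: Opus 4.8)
The plan is to proceed in three phases and then sum the costs. First, I would build the base Reeb graph $\RR_0 = \RR$ using the algorithm of Section~\ref{sub:Computing_the_Reeb-graph}, which by Theorem~\ref{thm:compute_Reeb_graph} takes $O(\tau n^2 \log n)$ time and yields a graph with $O(\tau n^2)$ vertices and edges. Second, I would transform $\RR_0$ into $\RR_{\alpha/2}$ by simulating the growing-$\gamma$ process described above as a discrete event simulation. Third, on $\RR_{\alpha/2}$ I would associate the trajectories with the edges and run the maximal-group labeling algorithm behind Theorem~\ref{thm:compute_groups}; this recovers exactly the $\alpha$-robust maximal groups, since by construction an ordinary maximal group in $\RR_{\alpha/2}$ is precisely an $\alpha$-robust maximal group.

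For the second (main) phase, recall that as $\gamma$ grows all split nodes advance and all merge nodes retreat at the same rate, and the graph changes structurally only when a split node meets a merge node across a shared edge, at the value of $\gamma$ equal to half that edge's duration. I would therefore maintain a priority queue keyed by this encounter value, initialized with every edge that runs from a split node to a merge node and has duration at most $\alpha$ (longer edges never encounter within $\gamma \le \alpha/2$). Processing the minimum event restructures a constant-size neighborhood of the graph, either a passing or a collapse (Fig.~\ref{fig:encounter}), creating $O(1)$ new split-to-merge edges whose encounter values I compute and insert. Each event thus costs $O(1)$ graph updates and $O(1)$ priority-queue operations, i.e.\ $O(\log n)$ time. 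By Theorem~\ref{thm:upperbound_encounters} there are $O(\tau n^3)$ encounters, so this phase runs in $O(\tau n^3 \log n)$ time.

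The final phase costs $O(\tau n^3 + N)$ time by Theorem~\ref{thm:compute_groups}, where $N$ is the output size. Summing the three phases gives $O(\tau n^2 \log n + \tau n^3 \log n + \tau n^3 + N) = O(\tau n^3 \log n + N)$, as claimed.

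The step I expect to be the main obstacle is establishing the \emph{correctness} of the event simulation rather than its running time. I would need to argue that processing encounters in nondecreasing order of their $\gamma$-values is valid, in particular that a passing or collapse event occurring at value $\gamma_0$ can only create edges whose own encounter value is at least $\gamma_0$, so that no event is ever discovered after the moment at which it should have been handled, and that every newly created split-to-merge edge is detected and enqueued. One must also confirm that the two interacting nodes are always the endpoints of a common edge, so that a single queue entry per such edge suffices; this follows from the uniform motion of all nodes noted before Theorem~\ref{thm:upperbound_encounters}. Once these invariants are in place, the identification of ordinary maximal groups in $\RR_{\alpha/2}$ with $\alpha$-robust maximal groups follows directly from the definition of an $\alpha$-component, and the time bound is immediate from the event count.
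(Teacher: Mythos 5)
Your proposal matches the paper's proof essentially step for step: build $\RR_0$, maintain a priority queue of split-to-merge edges keyed by their encounter time, process the $O(\tau n^3)$ encounters (Theorem~\ref{thm:upperbound_encounters}) at $O(\log n)$ each to obtain $\RR_{\alpha/2}$, and then run the labeling algorithm of Theorem~\ref{thm:compute_groups} on the result, giving $O(\tau n^3 \log n + N)$ total. The correctness concerns you flag (monotonicity of newly created events and detection of new split-to-merge edges) are real but are glossed over by the paper as well, so your argument is, if anything, slightly more careful than the original.
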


\section{Evaluation}
\label{sec:Evaluation}

To see if our model of the grouping structure is practical and indeed captures the grouping
behavior of entities we implemented and evaluated our algorithms. We would
like to visually inspect the maximal groups identified by our algorithm, and
compare this to our intuition of groups. For a small number of (short)
trajectories we can still show this in a figure, see for example
Fig.~\ref{fig:small_multiples}, which shows the monotonicity of the maximal
groups in size and duration. However, for a larger number of trajectories the
resulting figures become too cluttered to analyze. So instead we generated short
videos.\footnote{See \url{www.staff.science.uu.nl/~staal006/grouping}.}

\begin{figure*}[t]
  \centering
  \includegraphics[width=\textwidth]{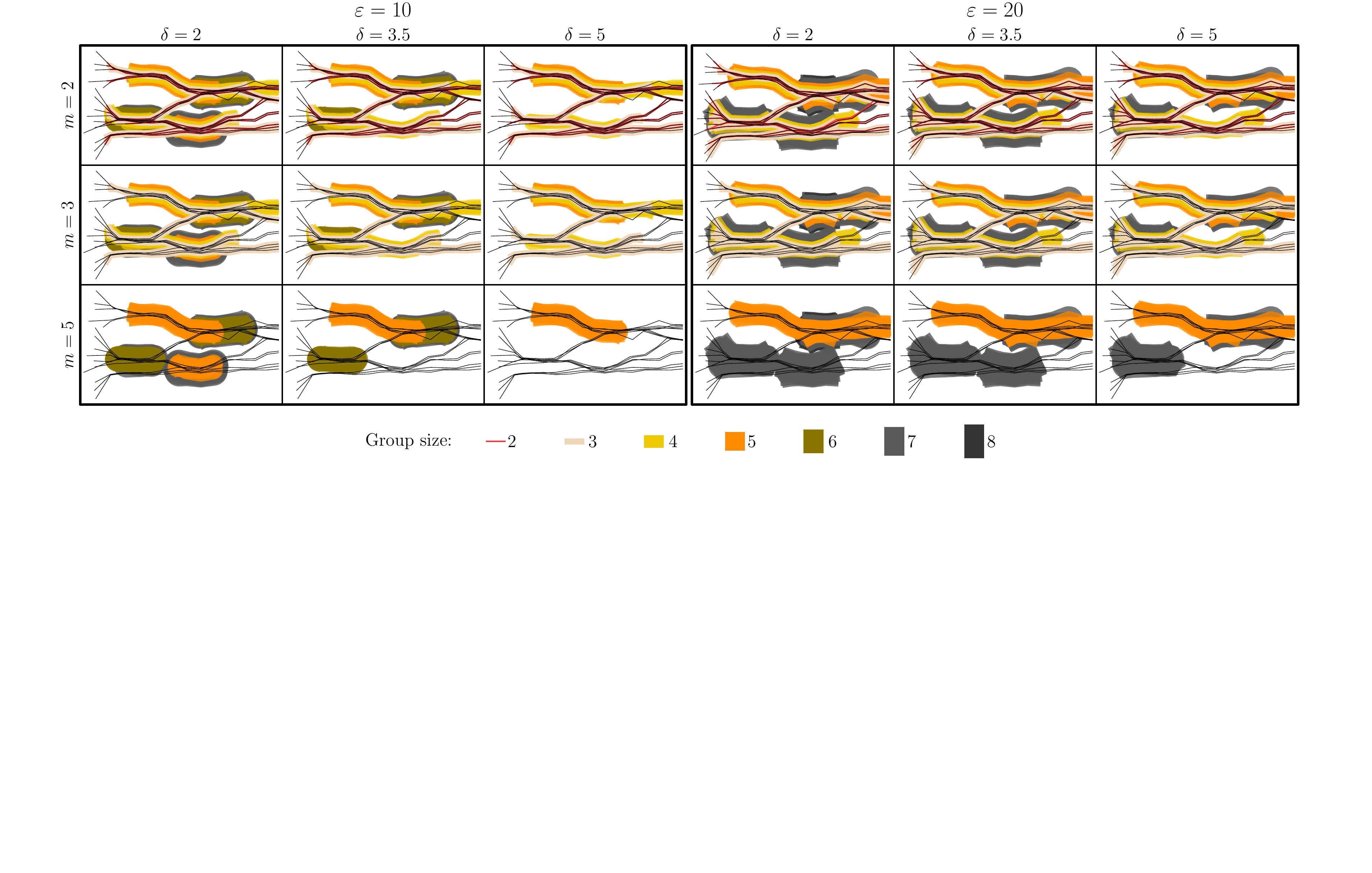}
  \caption{The maximal groups for varying parameter values. The time associated
with each trajectory vertex is proportional to its $x$-coordinate.}
  \label{fig:small_multiples}
\end{figure*}

We use two types of data sets to evaluate our method: a synthetic data set
generated using a slightly modified version of the NetLogo Flocking model
\cite{netlogo_flocking,netlogo}, and a real-world data set consisting of
deer, elk, and cattle, tracked in the Starkey project \cite{starkey}.



\paragraph{NetLogo} We generated several data sets using an adapted version of
the NetLogo Flocking model~\cite{netlogo_flocking}. In our adapted model the
entities no longer wrap around the world border, but instead start to turn when
they approach the border. Furthermore, we allow small random direction changes
for the entities. The data set that we consider here contains 400 trajectories,
with 818 edges each. Similar to Fig.~\ref{fig:small_multiples}, our videos
show all maximal groups for varying parameter values.

The videos show that our model indeed captures the crucial properties of grouping behavior well. We notice that the choice of  parameter values is important. In
particular, if we make $\eps$ too large we see that the entities are loosely
coupled, and too many groups are found. Similarly, for large values of $m$
virtually no groups are found. However, for reasonable parameter settings, for
example $\eps = 5.25$, $m=4$, and $\delta = 100$, we can clearly see that our
algorithm identified virtually all sets of entities that travel
together. Furthermore, if we see a set of entities traveling together that is
not identified as group, we indeed see that they disperse quickly after they
have come together. The coloring of the line-segments also nicely shows how
smaller groups merge into larger ones, and how the larger groups break up into
smaller subgroups. This is further evidence that our model captures the grouping behavior well.

\paragraph{Starkey} We also ran our algorithms on a real-world data
set, namely on tracking data obtained in the Starkey project~\cite{starkey}. This
data set captures the movement of deer, elk, and cattle in Starkey, a large
forest area in Oregon (US), over 
three years. Not all animals are
tracked during the entire period, and positions are not reported
synchronously for all entities. Thus, we consider only a subset of the
data, and resample the data such that all trajectories have vertices at the
same (regularly spaced) times. We chose a period of 30 days for which we have
the locations of most of the animals. This yields a data set containing 126
trajectories with 1264 vertices each.
In the Starkey video we can see that a large group of entities quickly forms in
the center, and then slowly splits into multiple smaller groups. We notice that
some entities (groups) move closely together, whereas others often stay
stationary, or travel separately.

\paragraph{Running Times} Since we are mainly interested in how well our
model captures the grouping behavior, we do not extensively evaluate the
running times of our algorithms. On our desktop system with a AMD Phenom II X2
CPU running at 3.2Ghz our algorithm, implemented in Haskell, computes the
grouping structure for our data sets in a few seconds. Even for 160
trajectories with roughly 20 thousand vertices each we can compute and report
all maximal groups in three minutes. Most of the time is spent on computing the
Reeb graph, in particular on computing the connect/disconnect events. Since our
implementation uses a slightly easier, yet slower, data structure to represent
the maximum weight spanning forest during the construction of the Reeb graph,
we expect that some speedup is still possible.

\section{Concluding Remarks}
\label{sec:Concluding_Remarks}

We introduced a trajectory grouping structure which uses Reeb graphs and a notion of persistence for robustness. We showed how to characterize and efficiently compute the maximal groups and group changes in a set of trajectories, and bounded their maximal number.
Our paper demonstrates that computational topology provides a mathematically sound way to
define grouping of moving entities. The complexity bounds, algorithms and implementation
together form the first comprehensive study of grouping.
Our videos show that our methods produce results that correspond to human intuition.

Further work includes more extensive experiments together with domain specialists,
such as behavioral biologists, to ensure further that the grouping structure
captures groups and events in a natural, expected way, and changes in the
parameters have the desired effect. At the same time, our research may be linked
to behavioral models of collective motion~\cite{s-cab-10}
and provide a (quantifiable) comparison of these.

We expect that for realistic inputs the size of the grouping structure
is much smaller than the worst-case bound that we proved. We plan to confirm
this in experiments, and to provide faster algorithms under realistic input models.
We will also work on improving the visualization of the maximal groups and the
grouping structure, based on the reduced Reeb graph.




\bibliographystyle{abbrvnat}
\bibliography{grouping}

\bigskip\bigskip\bigskip

\noindent
Videos accompanying this paper can be found on \url{www.staff.science.uu.nl/~staal006/grouping}.

\clearpage
\appendix





\section{NP-completeness of robust grouping by the first definition}
\label{app:np_complete}

\begin{theorem}
  \label{thm:robust_groups_np_hard}
  Determining whether there is a robust group of size $k$ is NP-complete using
  the first definition of robust groups.
\end{theorem}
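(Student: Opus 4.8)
The plan is to prove membership in NP first, then NP-hardness by reduction from \problem{Clique}. For membership, observe that a robust group is certified by its entity set $G$ together with a witnessing interval $I=[t_s,t_e]$; both have polynomial description size, since the endpoints of $I$ may be taken from the $O(\tau n^2)$ connect/disconnect event times and their $\pm\alpha/2$ shifts. To verify condition (\textit{iii}) in polynomial time, I would compute the set $\Phi_G\subseteq\R$ of times $t'$ at which all of $G$ lies in a single component of $\C(t')$. Since the components change only at the $O(\tau n^2)$ events, $\Phi_G$ is a union of polynomially many intervals and can be computed by testing the $\eps$-connectivity of $G$ on each elementary interval. Condition (\textit{iii}) is then equivalent to the containment $I\subseteq\Phi_G\oplus[-\alpha/2,\alpha/2]$ of $I$ in the $\alpha/2$-dilation of $\Phi_G$, which is checkable in polynomial time; together with $|G|\ge m$ and $|I|\ge\delta$ this verifies a robust group.

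For hardness I would reduce from \problem{Clique}: given a graph $H=(V,E)$ and an integer $k$, create one entity per vertex and set $m=k$. The default configuration keeps all entities in a single tight cluster (so $V$ itself is one component), and I encode the non-adjacency constraints by a sequence of time \emph{gadgets}, one per non-edge $f=\{u,w\}\notin E$, laid out consecutively along the time axis. Within the gadget for $f$ only the entities $u$ and $w$ leave the cluster: their excursions are designed so that at every instant at least one of $u,w$ is outside (hence no ``all-together'' component ever occurs inside the gadget), while within every window of length $\alpha$ there is an instant where \emph{only} $u$ is outside (offering the component $V\setminus\{u\}$) and an instant where \emph{only} $w$ is outside (offering $V\setminus\{w\}$). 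Consecutive gadgets are separated by short reconnection buffers, and the active region of each gadget is made longer than $\alpha$ so that it has a nonempty \emph{core}, every point of which only ``sees'' (within $\pm\alpha/2$) partitions of its own gadget. Before the first and after the last gadget all entities are permanently dispersed, and $\delta$ is set equal to the span from the first to the last gadget core, forcing any witnessing interval $I$ to contain the core of every gadget.

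The correctness argument has two directions, both resting on the fact that, inside a gadget core, ``$G$ lies in one component at some $t'\in[t-\alpha/2,t+\alpha/2]$'' holds if and only if $u\notin G$ or $w\notin G$. If $H$ has a clique $K$ of size $k$, then $K$ contains no non-edge, so in each gadget the required $u$-out or $w$-out instant is available within every window; hence $K$ is a robust group on the whole span, whose length is at least $\delta$, and $|K|=k\ge m$. Conversely, a robust group $G$ with $|G|\ge k$ has a witnessing $I$ containing every gadget core, so for each non-edge $\{u,w\}$, applying condition (\textit{iii}) at a core point forces $u\notin G$ or $w\notin G$; thus $G$ induces a clique, and any $k$ of its vertices form a clique of size $k$. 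I would then check feasibility: each trajectory is a polyline using $O(1)$ edges per gadget, so $\tau=O(|V|^2)$ with polynomially bounded coordinates and times; padding makes all trajectories share the same time stamps, and the timings can be perturbed to respect the non-degeneracy assumption that no three entities become $\eps$-(dis)connected simultaneously.

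The main obstacle is the timing design of the gadgets rather than the high-level clique encoding. I must simultaneously guarantee the ``OR'' semantics (both single-out instants reachable from every core point, so excluding either endpoint suffices), the absence of any all-together instant reachable from a core point (so the constraint genuinely bites), and non-interference between adjacent gadgets and their reconnection buffers, all while tuning $\delta$ so that a witnessing interval is forced to meet every gadget core. Getting these $\alpha$-window interactions exactly right is where the real work lies.
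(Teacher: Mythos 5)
Your proposal is correct in both halves, but your hardness reduction uses a genuinely different gadget than the paper's. Both proofs reduce from \problem{Clique} and both exploit the same disjunctive mechanism (the $\alpha$-window lets a robust group satisfy a constraint at \emph{some} nearby time, so each time slot encodes an OR), but the encodings differ. The paper builds one time slot per \emph{vertex} $v_i$: it sets $\eps=0$, $\alpha=3/4$, lets all entities coincide at integer times, isolates $x_i$ at time $s_i=i+1/4$, and gathers exactly the closed neighbourhood of $v_i$ at time $t_i=i+3/4$; a window around $i+1/2$ then forces ``$x_i\notin R$ or $R\subseteq\{x_i\}\cup N(v_i)$'', which yields cliqueness with only $O(n)$ vertices per trajectory and no padding machinery. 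You instead build one gadget per \emph{non-edge} $\{u,w\}$, with alternating excursions forcing ``$u\notin G$ or $w\notin G$'' at every core point; this is equally valid but costs $\tau=O(n^2)$ edges per trajectory and requires the core/buffer/dispersal apparatus to force a witnessing interval through every gadget. Two points to tighten if you carry this out: (i) your stated choice of $\delta$ (the span from the first to the last gadget core) is literally too small, since a witnessing interval can start up to roughly $\alpha/2$ plus a buffer before the first core and still satisfy condition (\textit{iii}) there; $\delta$ must be enlarged to cover that slack on both ends (while staying below the total span on which a clique is robust), which is routine but not optional; (ii) the core claim you need is slightly stronger than ``no all-together component occurs'' --- you need that no component contains \emph{both} $u$ and $w$ inside the active region, so the two excursions must be kept $\eps$-disconnected from each other, not just from the cluster. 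Finally, your NP-membership argument (computing the set $\Phi_G$ of together-times and checking $I\subseteq\Phi_G\oplus[-\alpha/2,\alpha/2]$) is more explicit than the paper's one-sentence dismissal, and is a worthwhile addition.
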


\begin{proof}
  We prove this by a reduction from \problem{Clique}: given a graph $G=(V,E)$
  is there a clique of size $k$? Choose $\eps = 0$, $m \leq k$, $\delta
  \leq n+1$, and $\alpha = 3/4$. We now construct a set of $n$ trajectories,
  one for each vertex, each consisting of $O(n)$ vertices such that there is a
  robust group $R$ on $I=[1,n+1]$ consisting of $k$ entities if and only if $G$
  contains a clique $R'$ of size $k$. The proof idea is similar to that in
  \cite{gudmundsson2006computing}.

  Let $N(v)$ denote the neighbours of vertex $v \in V$. For each vertex $v_i$
  we define five points $p_i, a_i, b_i, c_i,$ and $d_i$. Additionally, we
  define a point $p_{n+1}$.  We assume that all these points (over all vertices)
  are different. Let $s_i = (i+1)-\alpha = i+(1/4)$ and $t_i=i+\alpha =
  i+(3/4)$ be two times corresponding to vertex $v_i$. We now construct an
  entity/trajectory $x_i$ for each vertex $v_i \in V$ such that:
  \begin{itemize*}
  \item at time $j$, $x_i$ is at $p_j$,
  \item at time $s_j$, $x_i$ is at $a_j$ if $v_i = v_j$, and at $b_j$
    otherwise,
  \item at time $t_j$, $x_i$ is at $c_j$ if $v_i \in \{v_j\} \cup N(v_j)$, and
    at $d_j$ otherwise, and
  \item at any other time no two entities are at the same place at the same
    time.
  \end{itemize*}
  Fig.~\ref{fig:np_hardness_example} shows an example of this construction.

  \begin{figure*}[ht]
    \centering
    \includegraphics{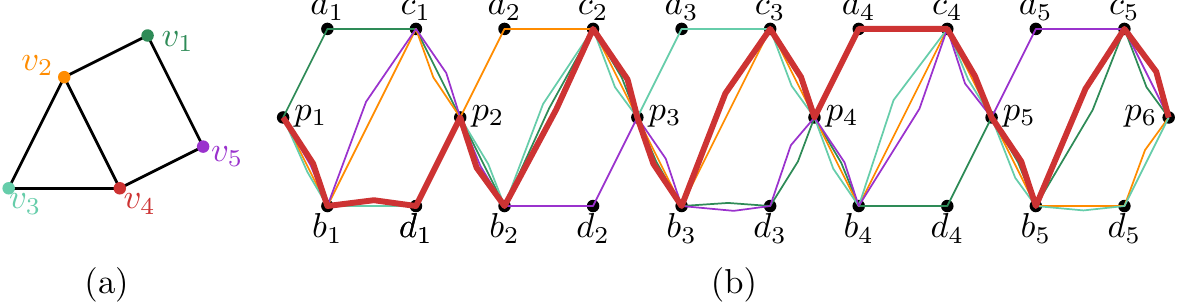}
    \caption{An input graph $G=(V,E)$ (a), the trajectories for $G$, the
      $x$-coordinate of the points corresponds to the time (b). The trajectory
      corresponding to $v_4$ is shown in bold.}
    \label{fig:np_hardness_example}
  \end{figure*}

  Since $\eps$ is set to zero all entities in a robust group $R$ have to
  be at the same point in every interval of length $\alpha$. The only times
  when multiple entities are at the same point are at times $i,s_i$ $t_i$, with
  $1 \leq i \leq n+1$. Because $i+1-i > \alpha$ it follows all entities in $R$
  have to be together at $s_i$ or $t_i$. We now select a vertex to be part of
  the clique $R'$ if and only if the entities in $R$ were not together at time
  $s_i$. All entities except $x_i$ are together at time $s_i$, so it follows
  that $x_i \in R$. We then have $R' = \{v_i \mid x_i \in R\}$.

  Suppose there is a robust group $R$ of size $k$ on $I$. We now show that for
  every pair $v_i,v_j \in R'$, $v_i$ and $v_j$ are neighbours. Hence $R'$ forms
  a clique (of size $k$).

  Both $v_i$ and $v_j$ are in $R'$, so $x_i$ and $x_j$ are in $R$. Entities
  $x_i$ and $x_j$ cannot be at the same point at time $s_i$ since
  $x_i$ is the only entity on point $a_i$. The same holds for $s_j$. So they
  must have been together at $t_i$ and $t_j$. In particular, they must have
  been at points $c_i$ and $c_j$, and hence $v_i$ and $v_j$ are neighbours.

  The proof for the other direction, i.e., if $R'$ is a clique in $G$ then $R$
  is a robust group, is symmetrical. Clearly, the reduction is
  polynomial. Since it is also easy to check that a given set of entities forms
  a robust group we conclude that the problem is NP-complete.
\end{proof}


\end{document}